\def\Beweisende{\square}            
\def\BewEnde{\hfill{\Beweisende}}
\def\RR{{\mathbb R}}
\def\dach#1{\widehat{#1}}
\def\Vkt#1{{\mathbf #1}}
\newcommand{\go}[1]{{\sf #1}}
\newtheorem{convention}{Convention}
\begin{document}

\title*{%Tubular Bennett Structures
Flexible arrangement of two Bennett tubes
}

% Use \titlerunning{Short Title} for an abbreviated version of
% your contribution title if the original one is too long
\author{Georg Nawratil}
\authorrunning{G. Nawratil}
% Use \authorrunning{Short Title} for an abbreviated version of
% your contribution title if the original one is too long
%\institute{$^1$IRCCyN, CNRS, France, \email{Philippe.Wenger@irccyn.ec-nantes.fr} \\
%$^1$University of Minho, Portugal, \email{pflores@dem.uminho.pt}}
\institute{
  Institute of Discrete Mathematics and Geometry \&  
	Center for Geometry and Computational Design, TU Wien, \\
	\email{nawratil@geometrie.tuwien.ac.at}}

%
% Use the package "url.sty" to avoid
% problems with special characters
% used in your e-mail or web address
%
\maketitle

\abstract{
In analogy to flexible bipyramids, also known as Bricard octahedra, we study flexible couplings of two Bennett mechanisms. The resulting flexible bi-Bennett structures can be used as building blocks of flexible tubes with quadrilateral cross-section that possess skew faces.  
We present three 4-dimensional families of flexible arranged Bennett tubes and discuss some of their properties as well as their relation to flexible bipyramids and biprisms, respectively. 
}

\keywords{Bennett mechanisms, tubular structures, Bricard octahedra, bipyramids, biprisms, bi-Bennetts}

\section{Introduction}\label{sec:intro}

In 1897 Bricard \cite{bricard} proved that there are three types of flexible octahedra
in the Euclidean 3-space. 
These so-called {\it Bricard octahedra}  can also be seen as flexible bipyramids, where each quadrilateral pyramid corresponds to a spherical 4R-loop. 
As also planar 4R-loops are flexible, we can replace one or both pyramids by quadrilateral prisms. 
The complete classification of flexible arrangements of two quadrilateral prisms was given in \cite{naw2} and of a quadrilateral pyramid and a quadrilateral prism in \cite{naw1}. 
These different types can also be combined to construct flexible tubular structures (see \cite{kiumars} and the references given therein as well as \cite{nelson}).

Beside planar and spherical 4R-loops there also exist spatial ones \cite{delassus}, known as Bennett mechanisms \cite{bennett}, which can be realized as so-called Bennett tubes by using skew faces (see Fig.\ \ref{fig1}).
Therefore, one can ask for flexible arrangements of a Bennett tube with a quadrilateral pyramid/prism and of two Bennett tubes, respectively.
We present results on the latter case within this paper, which is structured as follows: 

Section \ref{sec:bennett} is devoted to Bennett loops; starting with their basic geometric constraints, followed by the discussion of the planar and spherical limits of these structures (Subsections \ref{sec:planar} and \ref{sec:spherical}) and the Denavit-Hartenberg procedure (Subsection \ref{sec:dhproc}). This section is closed by pointing out some remarkable properties and special cases of the Bennett mechanisms (Subsection \ref{sec:prop}). 
In Section \ref{sec:pairedB}, we study the flexible coupling of two Bennett tubes, where we distinguish between symmetric arrangements (Subsection \ref{subsec:sym}) and general ones (Subsection \ref{subsec:nonsym}). The prismatic and conical limits of the obtained families are discussed in Section \ref{sec:limits_results}. 
Finally, the paper is concluded in Section 
\ref{sec:conclusion}, where also some open problems are stated.

 But before we plunge into medias res, we give a review on flexible arrangements of quadrilateral prisms/pyramids.

\begin{figure}[t]
\begin{center}
\begin{overpic}
   [height=35mm]{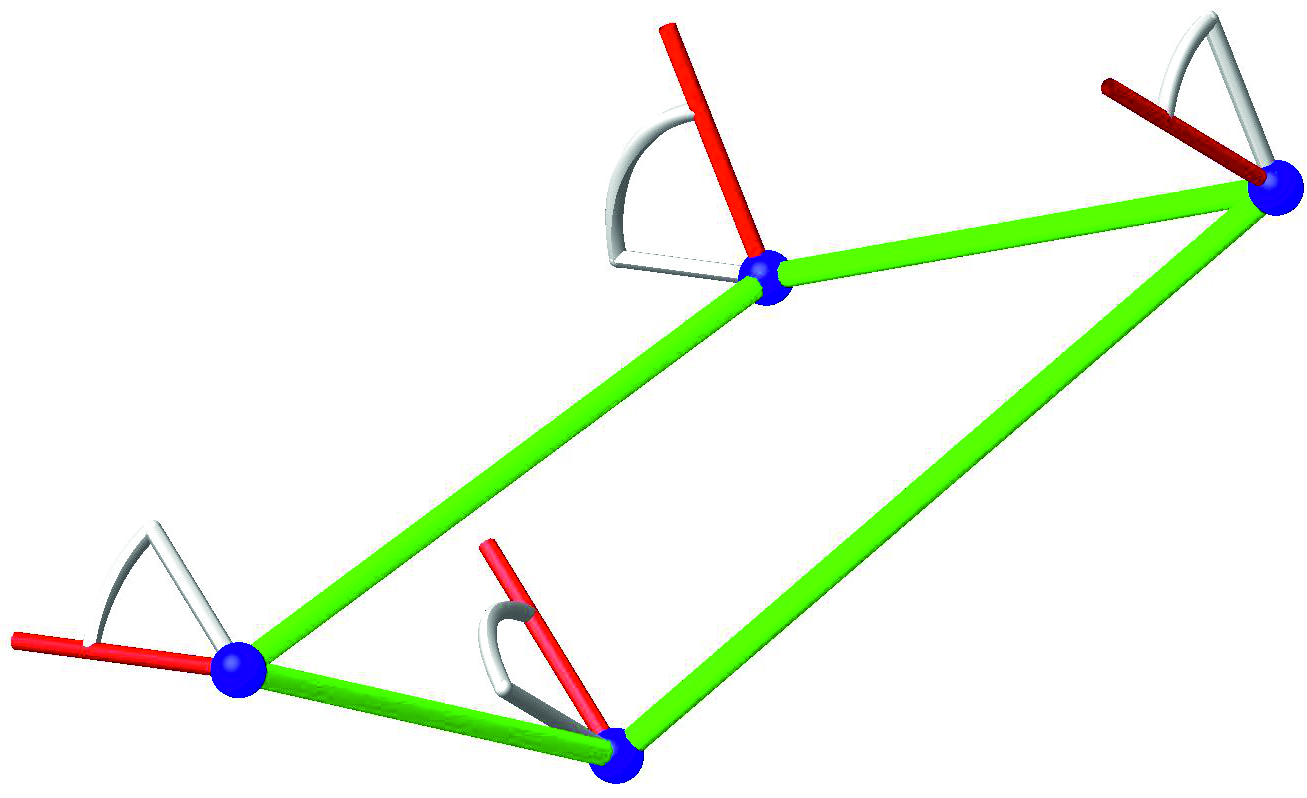}
\begin{scriptsize}
\put(0,0){a)}
\put(5,7){$\go r_{3,4}$}
\put(15,3){$\go F_{3,4}$}
\put(4,18){$\alpha_1$}
\put(40,17){$\go r_{2,3}$}
\put(50,0){$\go F_{2,3}$}
\put(32,11){$\alpha_2$}
\put(97,40){$\go F_{1,2}$}
\put(81,50){$\go r_{1,2}$}
\put(42,48){$\alpha_2$}
\put(54,56){$\go r_{1,4}$}
\put(60,35){$\go F_{1,4}$}
\put(85,58){$\alpha_1$}
\put(31,1){$d_1$}
\put(73,21){$d_2$}
\put(34,28){$d_2$}
\put(72,46){$d_1$}
\end{scriptsize}     
  \end{overpic} 
\,
\begin{overpic}
    [height=35mm]{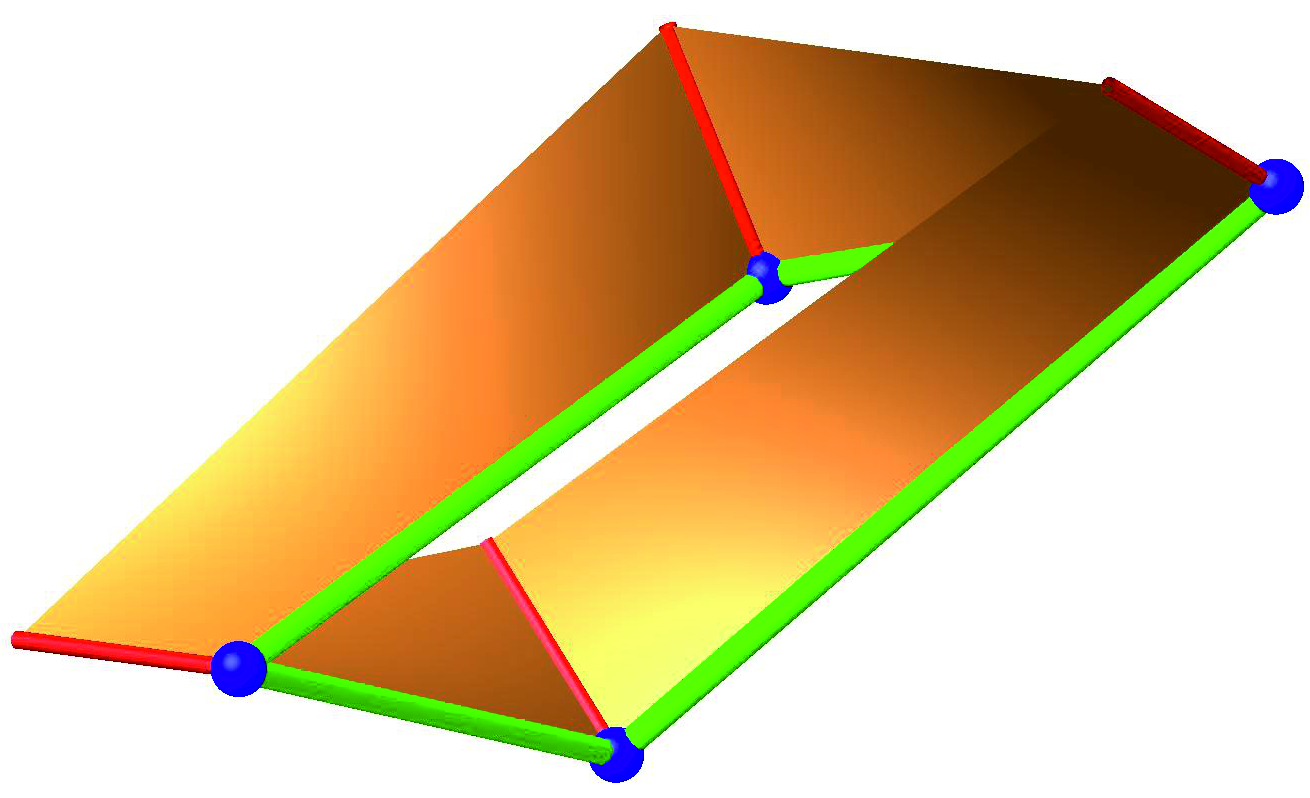}
\begin{scriptsize}
\put(0,0){b)}
\put(72,62){$\alpha_5$}
\end{scriptsize}     
  \end{overpic} 
\end{center}	
\caption{(a) Bennett loop including the used notation. (b) Self-intersection-free realization as Bennett tube, where the four skew quads are filled with hyperbolic paraboloid (HP) patches.}
  \label{fig1}
\end{figure}

\subsection{Review on flexible coupled quadrilateral prisms/pyramids}\label{sec:review}

\begin{enumerate}[(I)]
\item
{\bf Flexible quadrilateral bipyramids} \newline
According to Bricard \cite{bricard} there are three types of flexible octahedra\footnote{No face degenerates into a line (or even a point) 
and no two neighboring faces coincide permanently during the flex.}
in the Euclidean 3-space $E^3$. 
These so-called {\it Bricard octahedra} are as follows: 
\begin{enumerate}[Type 1]
\item 
All three pairs of opposite vertices are symmetric with respect to a common line.
\item
Two pairs of opposite vertices are symmetric with respect to a common plane $\varepsilon$ which passes through the remaining two vertices.  
\item
Two pairs of opposite vertices are V-hedral\footnote{The terms V-hedral and anti-V-hedral are also recapped within Definition
\ref{def:voss} given later on.} vertices and the remaining pair is an anti-V-hedral one (see also \cite[§ 13]{kokotsakis} or \cite{horns}). 
\end{enumerate}
Note that the latter case possesses two flat poses due to its 
composition from only V-hedral and anti-V-hedral vertices. 
For a detailed discussion of this type we also refer to \cite{stachel}. 
\item
{\bf Flexible arrangements of quadrilateral pyramids and prisms} \newline
These flexible couplings were studied in 
\cite{naw1} where it was shown that only two cases can exist, which are limits of Bricard octahedra of Type 2 (one vertex located in $\varepsilon$ is an ideal point) and Type 3 (one of the V-hedral vertices is an ideal point; see also \cite[Sec.\ 5]{stachel}).  
\item
{\bf Flexible quadrilateral biprisms} \newline
The complete classification of flexible arrangements of two quadrilateral prisms was given in \cite{naw2}, which reads as follows:
\begin{enumerate}[Type 1]
\item
The two pairs of opposite vertices are symmetric with respect to a common line as well as the edges of the prisms. 
\item
	\begin{enumerate}[i.]
	\item
	One pair of opposite vertices is symmetric with respect to a plane $\varepsilon$ which 
	contains the other pair of opposite vertices. Moreover, also the edges of the prisms	are symmetric with respect to $\varepsilon$. 
	\item
	The four vertices  are coplanar and form an anti-parallelogram and its plane of symmetry is parallel to the edges of the prisms.
	\end{enumerate}
\item
This type is characterized by the existence of two flat poses and consists of two prisms where the orthogonal cross-sections are congruent anti-parallelograms. 
\item
The four vertices are coplanar and form
	\begin{enumerate}[i.]
	\item
	a deltoid and the edges of the prisms are orthogonal to the deltoids line of symmetry,
	\item
	a parallelogram.  
	\end{enumerate}
\end{enumerate}
Again Types 1--3 can be seen as the corresponding limits of Bricard octahedra only the Type 4 cases do not fall into this category.   
\end{enumerate}

\section{Bennett loops}\label{sec:bennett}

There is a huge body of literature on this topic, but we lean on \cite[Chapter 2]{dietmaier} due to its clarity. We consider a closed serial 4R chain  consisting of
 the systems $\Sigma_1,\ldots,\Sigma_4$, which are linked in a loop by rotary axes  $\go r_{1,4},\go r_{1,2}, \go r_{2,3}, \go r_{3,4}$, where $\go r_{i,j}$ denotes the connection between $\Sigma_i$ and $\Sigma_j$ with $i<j$. 
It is well-known that a 4R loop,
  which is neither planar ($\Leftrightarrow$ four R-joints are parallel) nor 
spherical ($\Leftrightarrow$ four R-joints are copunctal), can only be flexible iff it is a Bennett loop, 
which can be characterized 
by the following conditions on the Denavit-Hartenberg parameters:
\begin{equation}\label{eq:Bennett_conditions}
d_2=d_4, \quad d_1=d_3, \quad \alpha_2=\alpha_4, \quad \alpha_1=\alpha_3, \quad {d_1}{\sin{(\alpha_2)}}={d_2}{\sin{(\alpha_1)}}, 
\end{equation}
and zero offsets, where $d_i$ is the perpendicular distance between adjacent axes $\go r_{i-1,i}$ and $\go r_{i,i+1}$ 
and $\alpha_i$ is the twist angle enclosed by their directions. 
Later on we also need the point $\go F_{i,j}\in \go r_{i,j}$ 
in which the two common perpendiculars to the adjacent axes intersect $\go r_{i,j}$ (see Fig.\ \ref{fig1}). 

Note that Eq.\ (\ref{eq:Bennett_conditions}) holds under consideration of the following convention: 

\begin{convention}\label{rem:assumption}
According to \cite[page 9]{dietmaier} we can assume without loss of generality that (a) distances $d_1,\ldots, d_4>0$ and (b) $\alpha_1,\ldots, \alpha_4\in(0;\pi)$ hold. \hfill $\diamond$
\end{convention}

For the rotation angles $\theta_{i-1,i}$ about the axes $\go r_{i-1,i}$ the following relations hold:
\begin{equation}\label{eq:Bennett_relations}
\theta_{1,2}=-\theta_{3,4}, \quad \theta_{2,3}=-\theta_{1,4}, \quad K:=\tan\tfrac{\theta_{1,2}}{2}\tan\tfrac{\theta_{2,3}}{2}=\tfrac{\tan\tfrac{\alpha_1}{2}+\tan\tfrac{\alpha_2}{2}}{\tan\tfrac{\alpha_1}{2}-\tan\tfrac{\alpha_2}{2}}
\end{equation}
under Convention \ref{rem:assumption}. Note that $K$ is the transmission relation of adjacent rotation axes, which is also known as input-output equation.

\begin{remark}
Moreover, Dietmaier pointed out that one has to characterize a Bennett loop by the three 
values $\alpha_1$, $\alpha_2$ and either $d_1$ or $d_2$. The three values $d_1$, $d_2$ and $\alpha_i$ yield two possibilities for $\alpha_j$ 
with pairwise distinct $i,j\in\left\{1,2\right\}$.  \hfill $\diamond$
\end{remark}

\begin{figure}[t]
\begin{center}
\begin{overpic}
   [height=30mm]{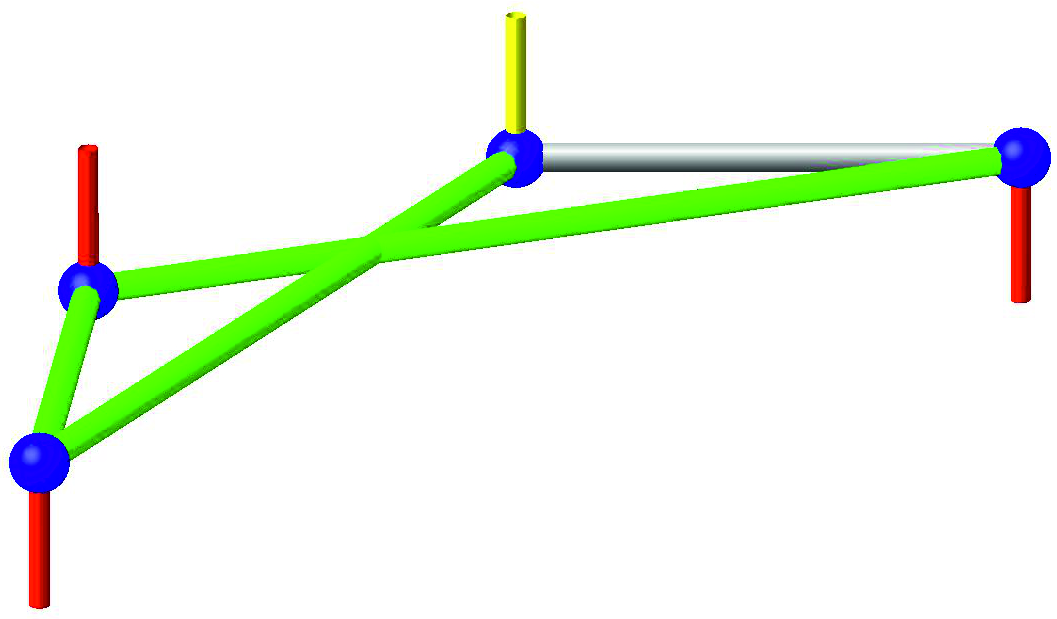}
\begin{scriptsize}
\put(-4,0){a)}
\put(38,45){$\go F_{1,4}$}
\end{scriptsize}     
  \end{overpic} 
\quad
\begin{overpic}
    [height=30mm]{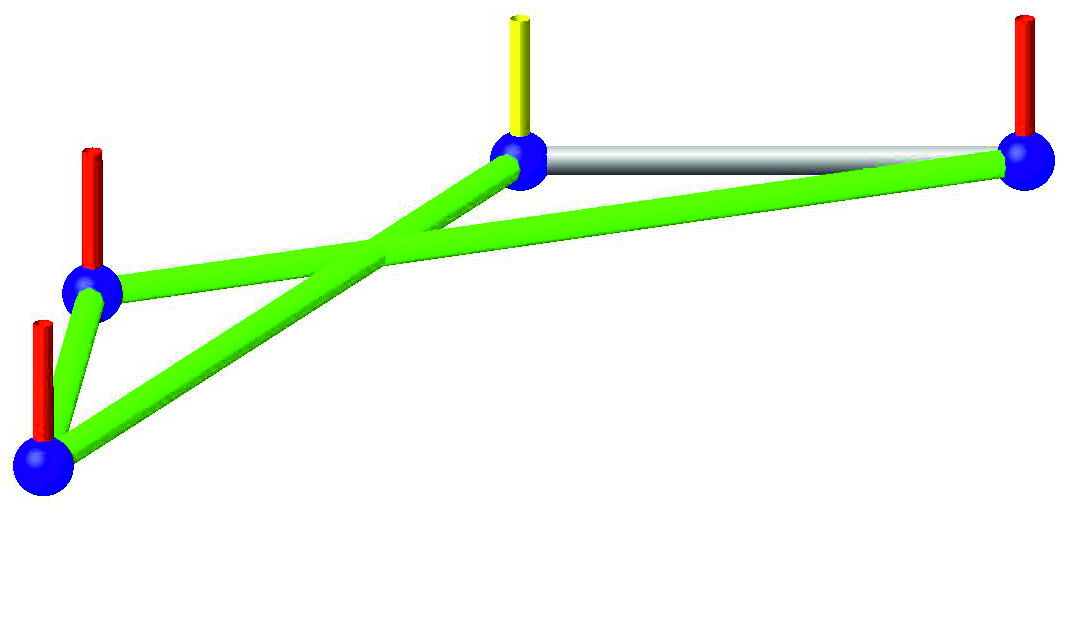}
\begin{scriptsize}
\put(-4,0){b)}
\put(38,45){$\go F_{1,4}$}
\end{scriptsize}     
  \end{overpic} 
\newline
\begin{overpic}
   [height=30mm]{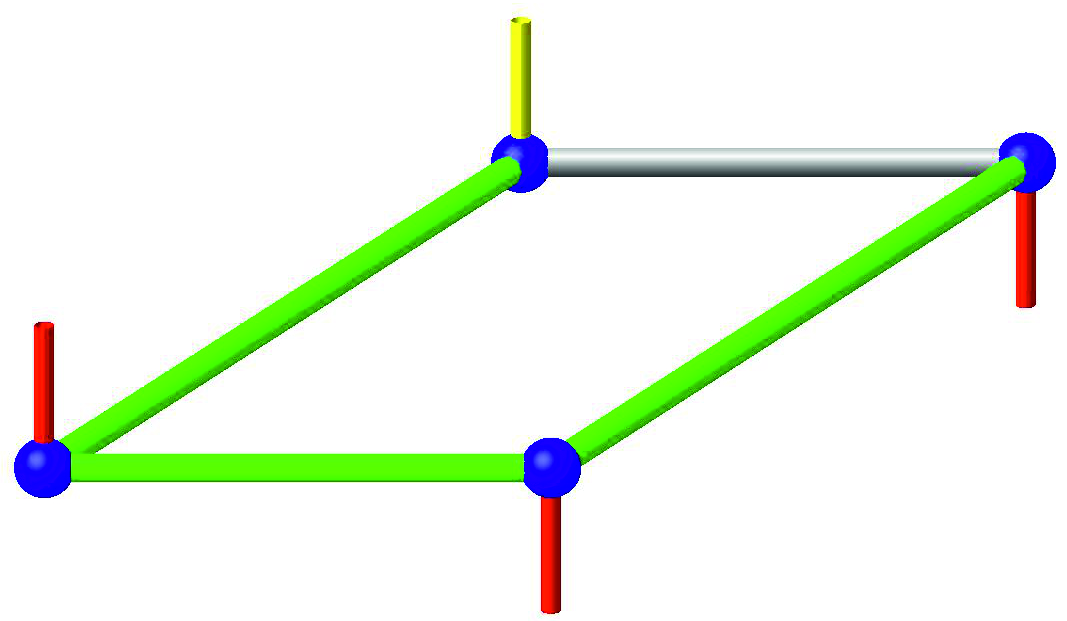}
\begin{scriptsize}
\put(-4,0){c)}
\put(38,45){$\go F_{1,4}$}
\end{scriptsize}     
  \end{overpic} 
\quad
\begin{overpic}
    [height=30mm]{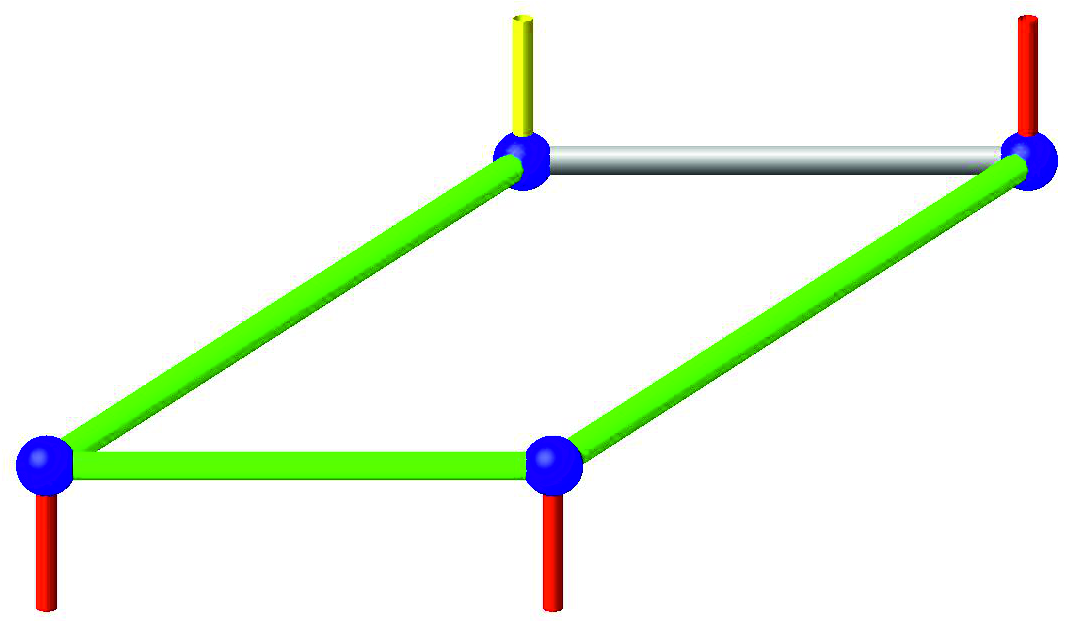}
\begin{scriptsize}
\put(-4,0){d)}
\put(38,45){$\go F_{1,4}$}
\end{scriptsize}     
  \end{overpic} 
  
\end{center}	
\caption{Illustration of the planar limits of Bennett loops: (a) Case (1a), (b) Case (2a), (c) Case (1b), (d) Case (2b). 
The illustrations are done for the parameters $d_1=0.5$, $d_2=1$ and $\tau=0.6$ using the DH-procedure discussed in Section \ref{sec:dhproc} (Remark \ref{rem:dh}), where the reference frame is chosen in such a way that its origin equals $\go F_{1,4}$, the $x$-axis (yellow) coincides with $\go r_{1,4}$ (incl.\ orientation) and $\go r_{1,2}$ intersects the positive $z$-axis (gray).
}
  \label{fig3}
\end{figure}

\subsection{Planar limits of Bennett loops}\label{sec:planar}

In this section we analyze the limits of the Bennett mechanism, when all axes become parallel; i.e.\ limit corresponds to a planar 4R-loop. 
More precisely, one can distinguish the following four cases as adjacent axes  $\go r_{i-1,i}$ and $\go r_{i,i+1}$ can be parallel ($\Leftrightarrow$ $\alpha_i=0$) or anti-parallel ($\Leftrightarrow$ $\alpha_i=\pi$). 
\begin{enumerate}
    \item $\alpha_1=\pi$
    \begin{enumerate}
        \item $\alpha_2=\pi$
        \item $\alpha_2=0$
    \end{enumerate}
    \item $\alpha_1=0$
    \begin{enumerate}
        \item $\alpha_2=0$
        \item $\alpha_2=\pi$
    \end{enumerate}
\end{enumerate}

For these four possible cases the corresponding limits of the 
transmission relation $K$ of 
Eq.\  (\ref{eq:Bennett_relations}) have to be computed. For doing this we first rewrite $K$ in dependence of the distances $d_1$ and $d_2$ by using either the identity
\begin{equation}\label{eq:ident1}
\tan\tfrac{\alpha_{i}}{2}=\tfrac{1-\cos(\alpha_i)}{\sin(\alpha_i)}
\end{equation}
or the following one:
\begin{equation}\label{eq:ident2}
\tan\tfrac{\alpha_{i}}{2}=\tfrac{\sin(\alpha_i)}{1+\cos(\alpha_i)}
\end{equation}
Thus two cases have to be distinguished:
\begin{enumerate}
    \item 
    Let's make use of Eq.\ (\ref{eq:ident1}) which turns $K$ into:
    \begin{equation*}
    K=\tfrac{\tfrac{1-\cos(\alpha_1)}{\sin(\alpha_1)}+\tfrac{1-\cos(\alpha_2)}{\sin(\alpha_2)}}{\tfrac{1-\cos(\alpha_1)}{\sin(\alpha_1)}-\tfrac{1-\cos(\alpha_2)}{\sin(\alpha_2)}}.
    \end{equation*}
    Now we can set $\sin(\alpha_i)=\tfrac{d_i}{k}$ with a constant $k\in\RR^+\setminus\left\{0\right\}$, which can be done due to the last relation of Eq.\ (\ref{eq:Bennett_conditions}), which yields 
    \begin{equation}\label{Kd:1}
    K=\tfrac{\tfrac{1-\cos(\alpha_1)}{d_1}+\tfrac{1-\cos(\alpha_2)}{d_2}}{\tfrac{1-\cos(\alpha_1)}{d_1}-\tfrac{1-\cos(\alpha_2)}{d_2}}
    \end{equation}
    as $k$ cancels out. By considering the limit  $\alpha_1=\pi$ this equation simplifies to 
    \begin{equation*}
    K=\tfrac{d_1+2d_2-d_1\cos(\alpha_2)}{-d_1+2d_2+d_1\cos(\alpha_2)}
    \end{equation*}
    Now we remain with two possible limits for $\alpha_2$:
        \begin{enumerate}
        \item $\alpha_2=\pi$: This implies $K=\tfrac{d_1+d_2}{d_2-d_1}$.
        \item $\alpha_2=0$: Then we get $K=1$.
        \end{enumerate}
    \item 
    By making use of Eq.\ (\ref{eq:ident2}) we get the following expression from the last equation of Eq.\ (\ref{eq:Bennett_relations}):
    \begin{equation*}
    K=\tfrac{\tfrac{\sin(\alpha_1)}{1+\cos(\alpha_1)}+\tfrac{\sin(\alpha_2)}{1+\cos(\alpha_2)}}{\tfrac{\sin(\alpha_1)}{1+\cos(\alpha_1)}-\tfrac{\sin(\alpha_2)}{1+\cos(\alpha_2)}}
    \end{equation*}
    Now we can again set  $\sin(\alpha_i)=\tfrac{d_i}{k}$ with a constant $k\in\RR^+\setminus\left\{0\right\}$, which yields 
    \begin{equation}\label{Kd:2}
    K=\tfrac{\tfrac{d_1}{1+\cos(\alpha_1)}+\tfrac{d_2}{1+\cos(\alpha_2)}}{\tfrac{d_1}{1+\cos(\alpha_1)}-\tfrac{d_2}{1+\cos(\alpha_2)}}
    \end{equation}
   as again $k$ cancels out.  For $\alpha_1=0$ this expression  simplifies to 
    \begin{equation*}
    K=\tfrac{d_1+2d_2+d_1\cos(\alpha_2)}{d_1-2d_2+d_1\cos(\alpha_2)}
    \end{equation*}
    Now we remain with two possible limits for $\alpha_2$:
        \begin{enumerate}
        \item $\alpha_2=0$: Then we get $K=\tfrac{d_1+d_2}{d_1-d_2}$.
        \item $\alpha_2=\pi$: This implies $K=-1$.
        \end{enumerate}
\end{enumerate}

\begin{remark}
Note that putting $\alpha_1=\alpha_2=0$ into Eq.\ (\ref{Kd:1}) or $\alpha_1=\alpha_2=\pi$ into
 Eq.\ (\ref{Kd:2})
would yield mathematically undefined expressions. \hfill $\diamond$
\end{remark}

It can easily be seen that cases (1a) and (2a) correspond to the anti-parallelogram and the cases (1b) and (2b) to the parallelogram, respectively. These cases are also illustrated in Fig.\ \ref{fig3}.

\begin{remark}
The cases (1x) and (2x) are related by reversing the orientation of two opposite rotation axes for x$\in\left\{\text{a,b}\right\}$. 
Note that another approach for deducing that the anti-parallelogram and parallelogram, respectively, are the planar limit cases of the Bennett mechanism is given in \cite{herve}.  
\hfill $\diamond$ 
\end{remark}

\begin{figure}[t]
\begin{center}
\begin{overpic}
   [height=45mm]{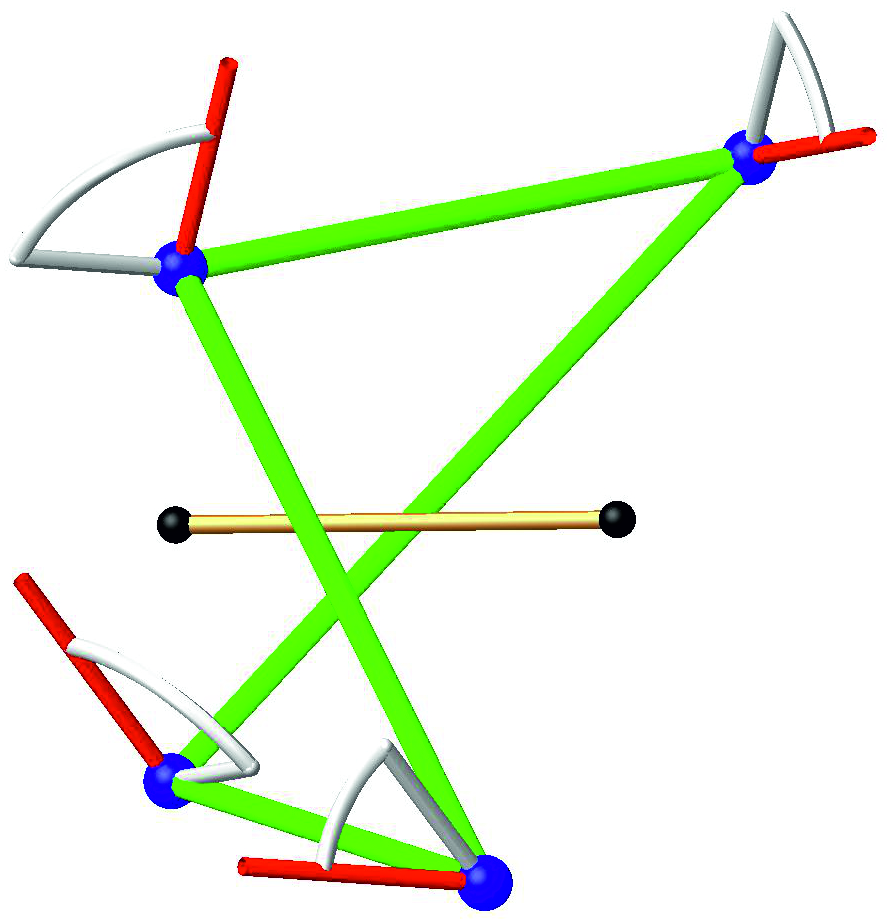}
\begin{scriptsize}
\put(0,0){a)}
\put(11,8){$\go F_{2,3}$}
\put(0,25){$\go r_{2,3}$}
\put(15,28){$\alpha_2$}
\put(25,0){$\go r_{3,4}$}
\put(57,2){$\go F_{3,4}$}
\put(32,14){$\alpha_1$}
\put(88,79){$\go r_{1,2}$}
\put(70,87){$\go F_{1,2}$}
\put(90,92){$\alpha_1$}
\put(9,65){$\go F_{1,4}$}
\put(27,90){$\go r_{1,4}$}
\put(6,84){$\alpha_2$}
\end{scriptsize}     
  \end{overpic} 
\,
\begin{overpic}
    [height=30mm]{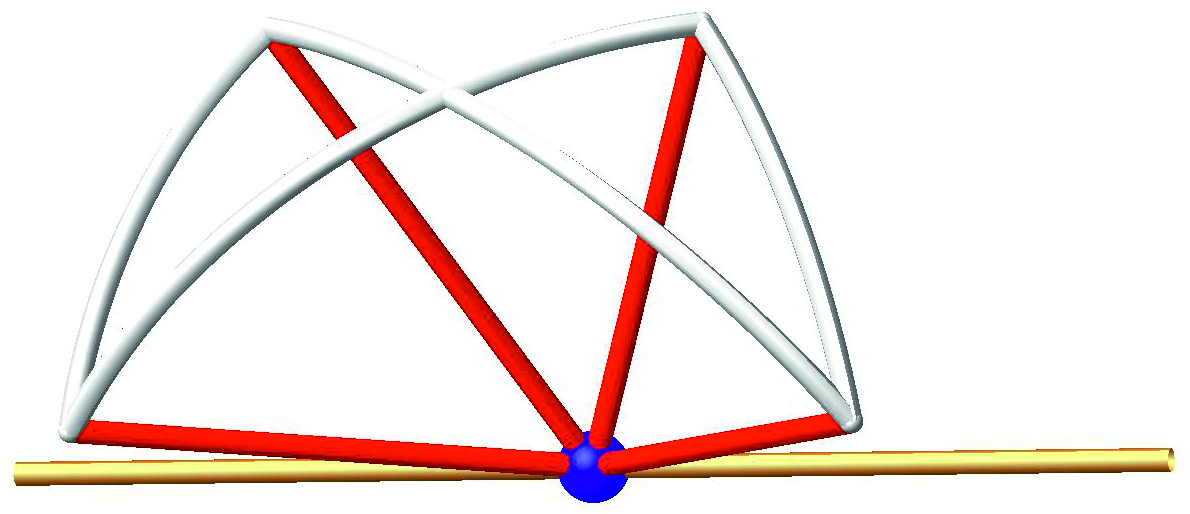}
\begin{scriptsize}
\put(-5,0){b)}
\put(68,27){$\alpha_1$}
\put(6,27){$\alpha_1$}
\put(20,21){$\alpha_2$}
\put(42,27){$\alpha_2$}
\put(32,7.5){$\Vkt r_{3,4}$}
\put(58,8.5){$\Vkt r_{1,2}$}
\put(34,15){$\Vkt r_{2,3}$}
\put(54.5,14){$\Vkt r_{1,4}$}
\end{scriptsize}     
  \end{overpic} 
\end{center}	
\caption{(a) Same Bennett mechanism as illustrated in Fig.\ \ref{fig1} but viewed from another perspective. A Bennett mechanism is line-symmetric, where the symmetry line is indicated in yellow. The black points on it are the midpoints of $\go F_{1,4}$, $\go F_{2,3}$ and  $\go F_{1,2}$, $\go F_{3,4}$, respectively. (b) Corresponding spherical indicatrix, which is a spherical anti-parallelogram. Note that the direction of the symmetry line is orthogonal to the symmetry plane of the spherical anti-parallelogram.}
  \label{fig2}
\end{figure}

\subsection{Spherical limits of Bennett loops}\label{sec:spherical}

We remain with analyzing the limit of the Bennett mechanism, when all the distances vanish ($d_1=d_2=0$).  
This is equivalent to considering the spherical indicatrix (see \cite{hunt}) of the Bennett mechanism, which is a spherical 4R-loop. The transmission relation $K$ of Eq.\ (\ref{eq:Bennett_relations})
is invariant under this limiting process as it does not depend on $d_1$ and $d_2$, respectively. It was already pointed out by Baker \cite{baker_bennett} that the spherical indicatrix 
is a spherical anti-parallelogram  with respect to Convention \ref{rem:assumption}, which is illustrated in Fig.\ \ref{fig2}. 
Note that for such a spherical 4-bar mechanism we can always assume without loss of generality that the spherical lengths are less than $\pi$.

Assuming that a Bennett mechanism is given, one can also assign in an arbitrary way the orientations to the axes. A different orientation of an axis compared to Convention \ref{rem:assumption} means that in the spherical indicatrix the corresponding point on the unit-sphere (Fig.\ \ref{fig2}) has to be replaced by its antipodal point. All cases, which were extensively discussed by the author in \cite{kilian}, can be subdivided into two classes; namely the Voss class and anti-Voss one. Their definitions read as follows (cf.\ \cite[Definition 1]{kilian}), under the assumption that spherical lengths of opposite sides of the spherical indicatrix are denoted by
$\alpha$, $\alpha^*$ and $\beta$, $\beta^*$, respectively:
\begin{definition}\label{def:voss}
The spherical indicatrix of a 4R-loop is called 
\begin{enumerate}[$\bullet$]
    \item 
    V-hedral for  $\alpha^*=\alpha$
    and  $\beta^*=\beta$.
    \item 
    anti-V-hedral for  $\alpha^*=\pi-\alpha$ 
    and  $\beta^*=\pi-\beta$. 
\end{enumerate}
\end{definition}

 However, it is possible to unify these two classes using spherical kinematics, where it is known as the isogonal case \cite{stachel2}. We only want to use this notation in the remainder of the paper as it avoids a cumbersome distinction of cases.

%%%%%%%%%%%%%%%%%%%%%%%%%%%%%%%%%%%%%%%%%%%

\subsection{Denavit-Hartenberg procedure}\label{sec:dhproc}

For the computation we attach the fixed reference frame in the following way to $\Sigma_1$: The $x$-axis is aligned with the axis $\go r_{1,4}$ and the origin is placed in the point $\go F_{1,4}$. Moreover, the positive $z$-axis intersects the axis $\go r_{1,2}$. 

In the following we only need two types of transformation matrices:
\begin{equation}\label{eq:matrices}
\Vkt R(\theta_{i,j})=\begin{pmatrix}
1 & 0 & 0 & 0 \\
0 & 1 & 0 & 0 \\
0 & 0 & \cos(\theta_{i,j}) & \sin(\theta_{i,j}) \\
0 & 0 & -\sin(\theta_{i,j}) & \cos(\theta_{i,j}) 
\end{pmatrix}, \,\,
\Vkt T(\alpha_i)=\begin{pmatrix}
1 & 0 & 0 & 0 \\
0 & \cos(\alpha_i) & -\sin(\alpha_i) & 0 \\
0 & \sin(\alpha_i) & \cos(\alpha_i) & 0 \\
d_i & 0 & 0 & 1 
\end{pmatrix},
\end{equation}
where we set $d_i=k\sin(\alpha_i)$ with a constant $k\in\RR^+\setminus\left\{0\right\}$, which can be done due to the last relation of Eq.\ (\ref{eq:Bennett_conditions}).
As we want to end up with algebraic expressions we make the half-angle substitutions 
\begin{equation}\label{eq:halfanglesubsti2}
a_i=\tan\tfrac{\alpha_i}{2}, \quad 
t_{i,j}=\tan\tfrac{\theta_{i,j}}{2},
\end{equation}
which we insert into Eq.\ (\ref{eq:matrices}) yielding $\Vkt R(t_{i,j})$ and $\Vkt T(a_i)$, respectively. 

\begin{remark}\label{rem:Bennett_relations}
Note that the half-angle substitutions of Eq.\ (\ref{eq:halfanglesubsti2}) simplify the last condition of Eq.\ (\ref{eq:Bennett_relations}) to $t_{1,2}t_{2,3}=\tfrac{a_1+a_2}{a_1-a_2}$. \hfill $\diamond$
\end{remark}

In the following we give the matrices $\Vkt M_{i,j}$ which can be used to compute the axis $\go r_{i,j}$ with respect to the fixed frame:
\begin{equation} \label{eq:upper}
\begin{split}
\Vkt M_{1,2}&:=\Vkt T(a_1),\\
\Vkt M_{2,3}&:=\Vkt T(a_1)\Vkt R(t_{1,2})\Vkt T(a_2), \\
\Vkt M_{3,4}&:=\Vkt T(a_1)\Vkt R(t_{1,2})\Vkt T(a_2)\Vkt R(t_{2,3})\Vkt T(a_1).
\end{split}
\end{equation}

It can easily be checked (e.g.\ with Maple) that the loop closure condition
\begin{equation}\label{eq:loop}
\Vkt T(a_1)\Vkt R(t_{1,2})\Vkt T(a_2)\Vkt R(t_{2,3})\Vkt T(a_1)\Vkt R(-t_{1,2})\Vkt T(a_2)\Vkt R(-t_{2,3})=\Vkt I_4    
\end{equation}
holds true under consideration of the transmission relation $K$, where  $\Vkt I_4$ denotes the ($4 \times 4$)-identity matrix. 

\begin{remark}\label{rem:dh}
Clearly, the correctness of the loop closure condition (\ref{eq:loop}) can also be checked for the spherical limit of the Bennett mechanism discussed in Section \ref{sec:spherical}, which can easily be obtained by setting $k=0$ ($\Rightarrow d_1=d_2=0$). 

For the planar limits of Section \ref{sec:planar}, we do not make the substitutions
 $d_i=k\sin(\alpha_i)$  but keep the unknowns $d_i$ within the matrices $\Vkt T(\alpha_i)$ of Eq.\ (\ref{eq:matrices}). By using the corresponding 
 transmission ratio $K$ also the resulting loop closure conditions can be validated. 
\hfill $\diamond$
\end{remark}

For the case of Bennett mechanisms, we set $t_{1,2}=\tfrac{a_1+a_2}{(a_1-a_2)t_{2,3}}$. 
Finally, we rename the remaining parameter $t_{2,3}$ by $\tau$ for the remainder of the paper. 
In this way we get from Eq.\ (\ref{eq:upper}) 
the transformation matrices $\Vkt M_{i,j}(\tau)$. Note that $\Vkt M_{i,j}(\tau)$ depends only on the following three design parameters
$k,a_1,a_2$ beside the motion parameter $\tau$.

Now we can compute the unit direction vector $\Vkt r_{i,j}$ of the axis $\go r_{i,j}$ 
and the position vector $\Vkt F_{i,j}$ of the distinguished point $\go F_{i,j}\in \go r_{i,j}$ by:
\begin{equation}\label{eq:Fandr}
\begin{pmatrix}
1 \\
\Vkt F_{i,j}
\end{pmatrix} =\Vkt M_{i,j}(\tau)
\begin{small}
\begin{pmatrix}
1 \\ 0 \\ 0 \\ 0 
\end{pmatrix}
\end{small}
,
\quad
\begin{pmatrix}
0 \\
\Vkt r_{i,j}
\end{pmatrix} =\Vkt M_{i,j}(\tau)
\begin{small}
\begin{pmatrix}
0 \\ 1 \\ 0 \\ 0 
\end{pmatrix}
\end{small}
.
\end{equation}

%%%%%%%%%%%%%%%%%%%%%%%%%%%%%%%%%%%%%%%%%%%%%%%%%%%%%%%%%%%%%%%%%%%%%%%%%%%%%%%%%%%%%%%%%%%%%%%%%%%%%%%%%%%%%%%%%%%%%%%%%%%%%

\subsection{Remarkable properties and special cases}\label{sec:prop}

It is well-known (e.g.\ see \cite{hon}) that a Bennett mechanism is line-symmetric (see Fig.\ \ref{fig2}) if neglecting the orientation of the rotation axes\footnote{If we take the orientation of the axes with respect to Convention \ref{rem:assumption} into account then it is not line-symmetric as already pointed out by Baker \cite{baker_bennett}.}. 
This symmetry property implies that the relative motion of opposite links is
line-symmetric, which was intensively studied by Krames \cite{krames}. During 
the motion, the symmetry-lines swap a one-sheeted hyperboloid (see also \cite{hamann} and \cite[pages 326-328]{bottema}). Moreover, note that the 
four axes of the Bennett mechanism are always located in a regulus of lines \cite{hon}.

\begin{remark}
We could change Convention \ref{rem:assumption} by assuming 
(a) distances $d_1=d_3>0$, $d_2=d_4>0$ and (b) $\alpha_1=\alpha_3\in(0;\pi)$, $\alpha_1=\alpha_3\in(-\pi;0)$. This corresponds to the change of the orientation of the axes $\go r_{2,3}$ and $\go r_{3,4}$ and also of the corresponding rotation angles $\theta_{2,3}$ and $\theta_{3,4}$. Concerning this convention Eq.\ (\ref{eq:Bennett_relations}) would 
read:
\begin{equation}\label{eq:Bennett_relations_sym}
\theta_{1,2}=\theta_{3,4}, \quad \theta_{2,3}=\theta_{1,4}, \quad K:=t_{1,2}t_{2,3}=\tfrac{1-a_1a_2}{1+a_1a_2}
\end{equation}
as $a_2$ in Eq.\ (\ref{eq:Bennett_relations})  has to be replaced by $-\tfrac{1}{a_2}$. Moreover, the last relation of 
 Eq.\ (\ref{eq:Bennett_conditions})
would also get a sign change; i.e.\ 
${d_1}{\sin{(\alpha_2)}}=-{d_2}{\sin{(\alpha_1)}}$. 
Concerning this convention the Bennett mechanism would be line-symmetric including also the orientation of the axes, which was also pointed out in \cite{song}.  
The corresponding spherical indicatrix would be an spherical parallelogram. According to Baker \cite{baker6R} these two different conventions, have been the source of much confusion in the literature.
\hfill $\diamond$
\end{remark}

There is also a subset of Bennett mechanisms having an additional plane-symmetry. For this subset the line-symmetry results from two plane-symmetries where the corresponding planes intersect orthogonally (along the axis of the line-symmetry). 
It can easily be checked that this additional plane-symmetry is equivalent with the condition that opposite axes of the Bennett mechanism intersect, which is again equivalent with the condition\footnote{This can easily be checked by computing the Plücker coordinates of the axes in dependence of $\tau$ based on Section \ref{sec:dhproc} and applying the intersection condition \cite{pottmann}.}
$a_1a_2=1$. Note that this is also equivalent with the condition that the four common normals have the same lengths ($\Leftrightarrow$ $d_1=d_2$) assumed $a_1\neq a_2$ (see also \cite{hon}). 
In this special case the regulus spanned by the four axes of the Bennett mechanism, splits up into two pencils of lines, whose vertices are located on the intersection line of the corresponding two carrier planes\footnote{The conic in the Klein image splits up into two lines \cite{pottmann}.}.

%%%%%%%%%%%%%%%%%%%%%%%%%%%%%%%%%%%%%%%%%%%%%%%%%%%%%%%%%%%%%%%%%%%%%%%%%%%%%%%%%%%%%%%%%%%%%%%%%%%%%%%%%%%%%%%%%%%%%%%%%%%%%
%%%%%%%%%%%%%%%%%%%%%%%%%%%%%%%%%%%%%%%%%%%%%%%%%%%%%%%%%%%%%%%%%%%%%%%%%%%%%%%%%%%%%%%%%%%%%%%%%%%%%%%%%%%%%%%%%%%%%%%%%%%%%

\section{Flexible arrangement of two Bennett tubes}\label{sec:pairedB}

In this section we present first classes of flexible arrangements of two Bennett tubes, which are denoted by $\mathcal{B}$ and $\dach{\mathcal{B}}$. In Subsection \ref{subsec:sym} we use for the construction a symmetry relation between  $\mathcal{B}$ and $\dach{\mathcal{B}}$. In Section \ref{subsec:nonsym} we present an algebraic approach for the determination of general flexible bi-Bennetts, but we are only able to solve the resulting set of necessary and sufficient conditions for a special family. 

\subsection{Symmetric arrangements}\label{subsec:sym}

Let us start by parametrizing the points $\go P_{i,j}$ on the axis $\go r_{i,j}$ of the Bennett $\mathcal{B}$; i.e.\
\begin{equation}\label{eq:pij}
\Vkt P_{i,j}=\Vkt F_{i,j}+\mu_{i,j}\Vkt r_{i,j}    
\end{equation}
with $\Vkt F_{i,j}$ and $\Vkt r_{i,j}$ of Eq.\ (\ref{eq:Fandr}). 

The first idea that comes into mind is to choose the $\mu_{i,j}$'s in way that the four points $\go P_{1,4},\go P_{1,2}, \go P_{2,3}, \go P_{3,4}$
are coplanar for all $\tau\in\RR$, as then one can reflect $\mathcal{B}$ on the corresponding plane to obtain $\dach{\mathcal{B}}$. Unfortunately, such a plane-symmetric flexible bi-Bennett  cannot exist, which can easily be checked (see Appendix).

Therefore, we only remain with the line-symmetry. According to Cayley  \cite{cayley} the skew quadrilateral $\go P_{1,4},\go P_{1,2}, \go P_{2,3}, \go P_{3,4}$ is line-symmetric if and only if it is a skew isogram; i.e.\ opposite sides have the same length which is expressed by the following two equations:
\begin{equation}\label{eq:linsym}
    \|\Vkt P_{1,4}-\Vkt P_{1,2} \|= \|\Vkt P_{2,3}-\Vkt P_{3,4} \| \qquad
     \|\Vkt P_{1,4}-\Vkt P_{3,4} \|= \|\Vkt P_{2,3}-\Vkt P_{1,2} \| 
\end{equation}
As these two equations are independent of $\tau$, we only have two conditions on the seven unknowns $k,a_1,a_2,\mu_{1,4},\mu_{1,2},\mu_{2,3},\mu_{3,4}$. We can cancel the factor of similarity by setting $k=1$. Then  we solve Eq.\ (\ref{eq:linsym}) for $a_1$ and $a_2$ which yields
\begin{equation}\label{eq:as}
\begin{split}
   a_1&= \sqrt{-\tfrac{(\mu_{1,4} - \mu_{1,2} + \mu_{2,3} - \mu_{3,4})(\mu_{1,4} - \mu_{1,2} - \mu_{2,3} + \mu_{3,4})}
    {(\mu_{1,4} + \mu_{1,2} + \mu_{2,3} + \mu_{3,4})(\mu_{1,4} + \mu_{1,2} - \mu_{2,3} - \mu_{3,4})}} \\
    a_2&= \sqrt{-\tfrac{(\mu_{1,4} + \mu_{1,2} - \mu_{2,3} - \mu_{3,4})(\mu_{1,4} - \mu_{1,2} + \mu_{2,3} - \mu_{3,4})}
    {(\mu_{1,4} + \mu_{1,2} + \mu_{2,3} + \mu_{3,4})(\mu_{1,4} - \mu_{1,2} - \mu_{2,3} + \mu_{3,4})}}
    \end{split}
\end{equation}
By a half-turn of $\mathcal{B}$ about the line of symmetry of  $\go P_{1,4},\go P_{1,2}, \go P_{2,3}, \go P_{3,4}$ we obtain $\dach{\mathcal{B}}$. This yields a 4-dimensional solution set.

\begin{figure}[t]
\begin{center}
\begin{overpic}
   [height=70mm]{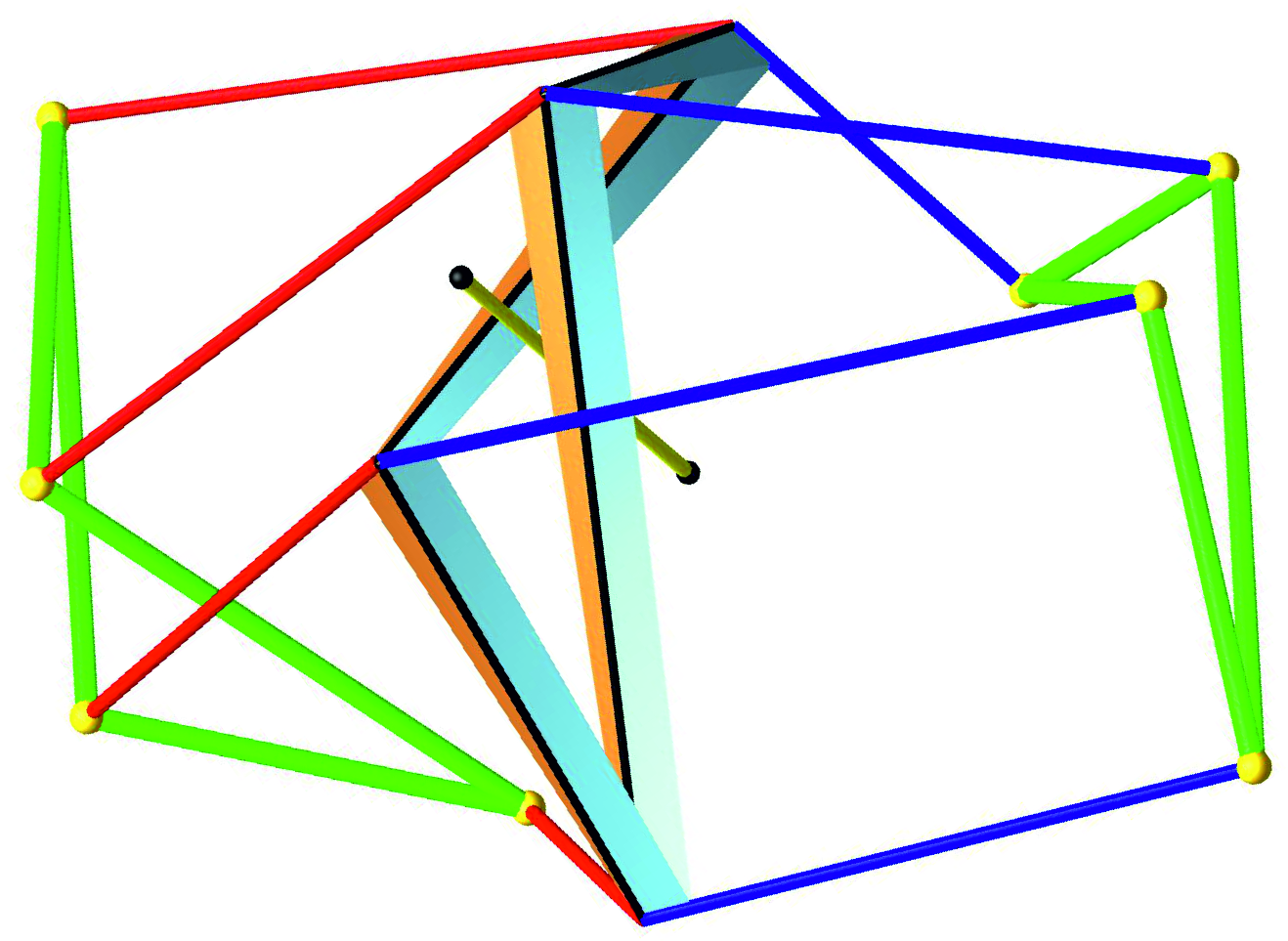}
\begin{scriptsize}
\put(74,5){$\go r_{1,2}$}
\put(94,9.5){$\go F_{1,2}$}
\put(50,-1){$\go P_{1,2}$}
\put(65,42){$\go r_{2,3}$}
\put(89,52){$\go F_{2,3}$}
\put(24,38){$\go P_{2,3}$}
\put(67,56){$\go r_{3,4}$}
\put(73,50){$\go F_{3,4}$}
\put(57,72){$\go P_{3,4}$}
\put(76,64){$\go r_{1,4}$}
\put(94,62){$\go F_{1,4}$}
\put(35.5,65.2){$\go P_{1,4}$}
\put(24,69){$\dach{\go r}_{1,2}$}
\put(3,66.5){$\dach{\go F}_{1,2}$}
\put(20,53){$\dach{\go r}_{2,3}$}
\put(0,31.5){$\dach{\go F}_{2,3}$}
\put(18,32){$\dach{\go r}_{1,4}$}
\put(3,14){$\dach{\go F}_{1,4}$}
\put(42.5,3.5){$\dach{\go r}_{3,4}$}
\put(37,7){$\dach{\go F}_{3,4}$}
\end{scriptsize}     
  \end{overpic} 
\end{center}	
\caption{Line-symmetric flexible bi-Bennett of family (A) of Theorem \ref{thm:linsym}: The blue Bennett $\mathcal{B}$ and the red Bennett $\dach{\mathcal{B}}$ are related by the line-symmetry of the skew isogram $\go P_{1,4},\go P_{1,2}, \go P_{2,3}, \go P_{3,4}$, where the symmetry line is displayed in yellow. The black points on it are the midpoints of $\go P_{1,4}$, $\go P_{2,3}$ and  $\go P_{1,2}$, $\go P_{3,4}$, respectively.
Along this isogram the two Bennett tubes are indicated by a self-collision-free slim ribbon of orange and light-blue HP-patches. 
Note that the vertices $\go P_{1,2}$ and $\go P_{3,4}$ are anti-Voss and $\go P_{1,4}$ and $\go P_{2,3}$ are Voss (cf.\ Definition \ref{def:voss}). 
This illustration corresponds to the following parameters: $k=1$, $\mu_{2,3}=1$, $\mu_{3,4}=\tfrac{1}{2}$, $\mu_{1,4}=\tfrac{37}{40}$, $\mu_{1,2}=\tfrac{7}{8}$, $a_1=\tfrac{1}{2}$, $a_2=\tfrac{1}{3}$ and $\tau=\tfrac{9}{10}$. 
}
  \label{fig4}
\end{figure}

It can easily be checked that there are only two special cases where the equations of Eq.\ (\ref{eq:linsym}) cannot be solved for $a_1$ and $a_2$ as given in Eq.\ (\ref{eq:as}), which are as follows:
\begin{align}
  &\mu_{1,4} = \mu_{2,3},&\quad  &\mu_{1,2} = \mu_{3,4}  \label{eq1} \\
  &\mu_{1,4} = -\mu_{2,3},&\quad  &\mu_{1,2} = -\mu_{3,4} \label{eq2} 
\end{align}
In case of Eq.\ (\ref{eq2}) the line of symmetry of $\go P_{1,4},\go P_{1,2}, \go P_{2,3}, \go P_{3,4}$ coincides with the one of $\go F_{1,4},\go F_{1,2}, \go F_{2,3}, \go F_{3,4}$. Therefore, $\mathcal{B}$ equals $\dach{\mathcal{B}}$, which is a trivial solution to our problem. 
Therefore, we remain with the case of Eq.\ (\ref{eq1}), which also gives a 4-dimensional solution set. This already proves the following result:
\begin{theorem}\label{thm:linsym}
    Besides the factor of similarity there exists two 4-dimensional families of 
     flexible line-symmetric arrangements of two Bennett tubes. Family (A) is determined by Eq.\ (\ref{eq:as}) and family (B) by  Eq.\ (\ref{eq1}). 
\end{theorem}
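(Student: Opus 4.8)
The plan is to realize the coupling by an explicit half-turn and then to reduce the flexibility requirement to a pair of $\tau$-free algebraic conditions whose solution set can be parametrized directly. First I would write out the four vertices $\Vkt P_{i,j}$ as functions of the design data $k,a_1,a_2$, the offsets $\mu_{i,j}$ and the motion parameter $\tau$, by substituting the $\Vkt F_{i,j}$ and $\Vkt r_{i,j}$ produced by the Denavit--Hartenberg procedure of Section \ref{sec:dhproc} (Eq.\ (\ref{eq:Fandr})) into Eq.\ (\ref{eq:pij}). Since a half-turn is an isometry, its image $\dach{\mathcal{B}}$ of the flexible Bennett $\mathcal{B}$ is automatically a congruent, and therefore flexible, Bennett. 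Hence the only thing left to guarantee is that $\mathcal{B}$ and $\dach{\mathcal{B}}$ stay joined along the common skew quadrilateral $\go P_{1,4},\go P_{1,2},\go P_{2,3},\go P_{3,4}$ for the \emph{entire} flex, i.e.\ that this quadrilateral remains a skew isogram for every value of $\tau$; by Cayley's criterion this is exactly the pair of conditions in Eq.\ (\ref{eq:linsym}).

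The decisive step is therefore to form the two squared-side-length differences of Eq.\ (\ref{eq:linsym}) and to verify---most conveniently with a computer algebra system such as Maple---that after simplification each of them is independent of $\tau$. This $\tau$-independence is the linchpin of the construction: it ensures that once the isogram relations hold they hold along the whole one-parameter motion, so that the line-symmetry linking $\mathcal{B}$ and $\dach{\mathcal{B}}$ is preserved and the coupled structure is genuinely flexible rather than flexible at a single pose. I expect this cancellation to be the main obstacle, since a priori the individual side lengths do depend on $\tau$ and only their matching is forced to be constant.

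Granting the $\tau$-independence, the two relations of Eq.\ (\ref{eq:linsym}) become two algebraic equations in the six parameters $a_1,a_2,\mu_{1,4},\mu_{1,2},\mu_{2,3},\mu_{3,4}$ after normalizing the similarity factor by $k=1$. I would then solve this system for $a_1$ and $a_2$; carrying out the elimination yields precisely the radical expressions of Eq.\ (\ref{eq:as}), with the four offsets $\mu_{i,j}$ remaining free. This exhibits family (A) as a $4$-dimensional solution set modulo similarity.

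Finally I would determine the loci on which the resolution (\ref{eq:as}) degenerates, namely where the radicands collapse to the indeterminate form $0/0$ because numerator and denominator vanish simultaneously, and show that these are exactly Eq.\ (\ref{eq1}) and Eq.\ (\ref{eq2}). For Eq.\ (\ref{eq2}) I would argue geometrically that $\mu_{1,4}=-\mu_{2,3}$ and $\mu_{1,2}=-\mu_{3,4}$ force the symmetry line of the $\go P$-quadrilateral to coincide with that of the $\go F$-quadrilateral, i.e.\ with the intrinsic symmetry line of the Bennett from Subsection \ref{sec:prop}; the half-turn then maps $\mathcal{B}$ onto itself, so $\dach{\mathcal{B}}=\mathcal{B}$ is the trivial solution. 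For Eq.\ (\ref{eq1}) I would check by direct substitution that $\mu_{2,3}=\mu_{1,4}$ and $\mu_{3,4}=\mu_{1,2}$ make both equations of Eq.\ (\ref{eq:linsym}) vanish identically, independently of $a_1$ and $a_2$; then $a_1,a_2,\mu_{1,4},\mu_{1,2}$ stay free, producing the second $4$-dimensional family (B) and completing the proof.
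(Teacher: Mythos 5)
Your proposal follows essentially the same route as the paper's own argument: invoke Cayley's skew-isogram criterion to reduce line-symmetry to Eq.\ (\ref{eq:linsym}), note that these two conditions are independent of $\tau$, solve them for $a_1,a_2$ to obtain family (A) as in Eq.\ (\ref{eq:as}), and then identify the exceptional loci where this resolution fails, namely Eq.\ (\ref{eq1}) (giving family (B)) and Eq.\ (\ref{eq2}) (discarded as the trivial case $\dach{\mathcal{B}}=\mathcal{B}$). One minor correction to your commentary: the $\tau$-independence is not a subtle cancellation between opposite sides, since each point $\go P_{i,j}$ is fixed on its axis and hence rigidly attached to both adjacent links, so every \emph{individual} side length of the quadrilateral $\go P_{1,4},\go P_{1,2},\go P_{2,3},\go P_{3,4}$ is already constant throughout the flex.
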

Examples of flexible line-symmetric bi-Bennetts for each family are 
illustrated in Figs.\ \ref{fig4} and \ref{fig5}.
It makes sense to distinguish these two families, as they possess different geometric properties, which are pointed out in the following theorem:

\begin{figure}[t]
\begin{center}
\begin{overpic}
   [height=70mm]{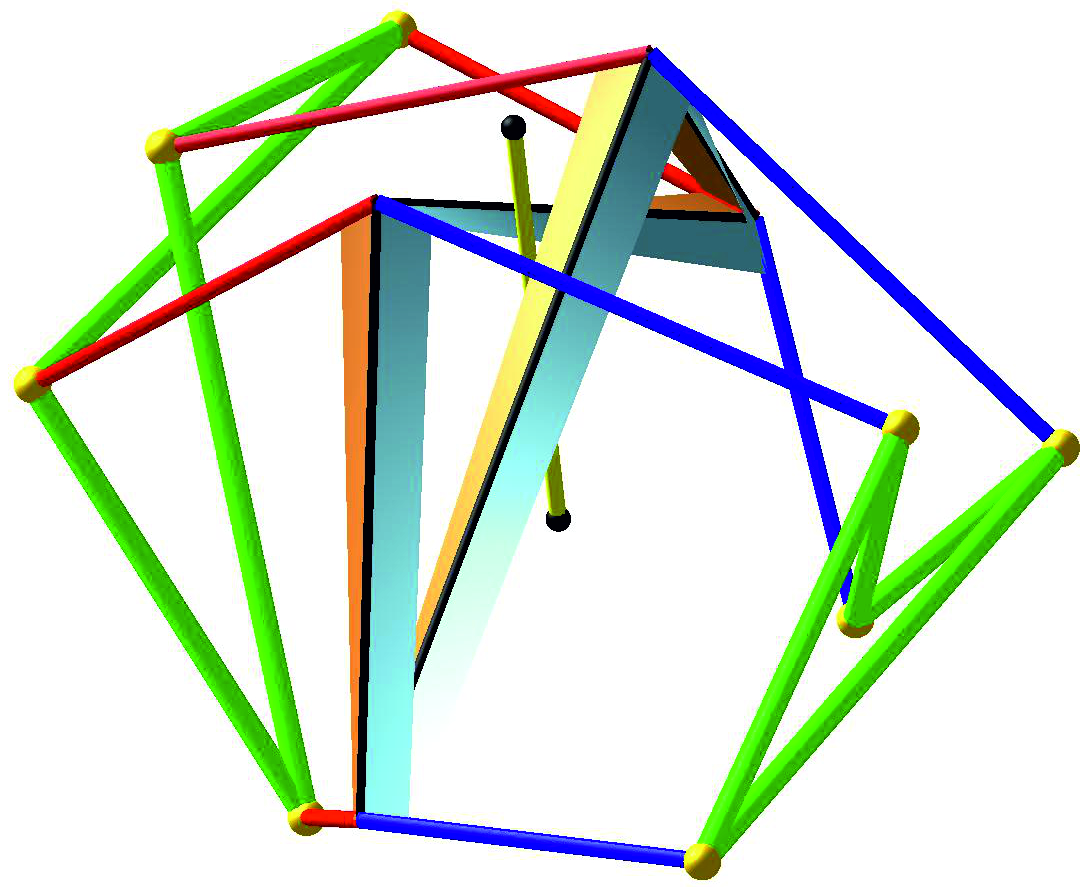}
\begin{scriptsize}
\put(67,0){$\go F_{1,2}$}
\put(33,3){$\go P_{1,2}$}
\put(22,4){$\dach{\go F}_{3,4}$}
\put(85,42){$\go F_{2,3}$}
\put(32,64.5){$\go P_{2,3}$}
\put(57,78.5){$\go P_{1,4}$}
\put(75,21){$\go F_{3,4}$}
\put(70.5,59){$\go P_{3,4}$}
\put(96,43){$\go F_{1,4}$}
\put(39,79){$\dach{\go F}_{1,2}$}
\put(9,66){$\dach{\go F}_{2,3}$}
\put(-2,43){$\dach{\go F}_{1,4}$}
\end{scriptsize}     
  \end{overpic} 
\end{center}	
\caption{Line-symmetric flexible bi-Bennett of family (B) of Theorem \ref{thm:linsym}. We only labeled the points (but not the rotation axes) to avoid an overloading of the figure. This illustration corresponds to the following parameters: $k=1$, $\mu_{2,3}=\tfrac{2}{3}$, $\mu_{3,4}=\tfrac{1}{2}$, $a_1=\tfrac{1}{2}$, $a_2=\tfrac{1}{3}$ and $\tau=\tfrac{9}{10}$.}
  \label{fig5}
\end{figure}

\begin{theorem}\label{thm:property}
The spherical indicatrix of each spherical 4R-loop with center  $\go P_{i,j}$ of family (A) is isogonal and of family (B) deltoidal\footnote{If one pair of adjacent sides has the same spherical length and the remaining two sides have also the same spherical length; i.e.\ $\alpha=\beta$ and $\alpha^*=\beta^*$ in the notation of Definition \ref{def:voss}.}  (symmetry plane is spanned by  $\go P_{i-1,j-1}$ and $\go P_{i+1,j+1}$). 
\end{theorem}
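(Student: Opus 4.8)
My plan is to read off the spherical $4R$-loop at a vertex $\go P_{i,j}$ as the vertex figure of the bi-Bennett. Four edges emanate from $\go P_{i,j}$: the axis $\go r_{i,j}$ of $\mathcal{B}$, the axis of $\dach{\mathcal{B}}$ passing through $\go P_{i,j}$, and the two sides of the skew isogram incident with $\go P_{i,j}$. Its spherical indicatrix is the spherical quadrilateral whose vertices are the unit directions of these four edges and whose side-lengths are the angles between consecutive edges. Since the four HP-faces meeting at $\go P_{i,j}$ are rigid and the isogram has constant edge-lengths, these four angles do not depend on $\tau$; hence the indicatrix is a fixed spherical quadrilateral and I may compute at a single parameter value. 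I will carry out the computation for $\go P_{1,4}$, whose isogram-neighbours are $\go P_{1,2}$ and $\go P_{3,4}$ and whose diagonal partner under the symmetry is $\go P_{2,3}$; the other three vertices then follow.

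First I would take the Denavit--Hartenberg coordinates of Section~\ref{sec:dhproc} with $k=1$, giving $\Vkt F_{i,j}$, $\Vkt r_{i,j}$, and hence $\Vkt P_{i,j}$ from Eq.~(\ref{eq:pij}), as explicit functions of $a_1,a_2,\tau$ and the $\mu_{i,j}$. The half-turn $\sigma$ carrying $\mathcal{B}$ to $\dach{\mathcal{B}}$ is the unique one interchanging $\go P_{1,4}\leftrightarrow\go P_{2,3}$ and $\go P_{1,2}\leftrightarrow\go P_{3,4}$, so the $\dach{\mathcal{B}}$-edge at $\go P_{1,4}$ is $\sigma(\go r_{2,3})$. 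Because $\sigma$ is an angle-preserving isometry with $\sigma(\go P_{2,3})=\go P_{1,4}$, $\sigma(\go P_{1,2})=\go P_{3,4}$, $\sigma(\go P_{3,4})=\go P_{1,2}$, the two indicatrix sides belonging to $\dach{\mathcal{B}}$ equal angles measured in $\mathcal{B}$ at $\go P_{2,3}$. Writing the sides in the cyclic order $(\alpha,\beta,\alpha^*,\beta^*)$ of Definition~\ref{def:voss} (so that $\alpha,\alpha^*$ and $\beta,\beta^*$ are the two opposite pairs), all four reduce to $\mathcal{B}$-data alone:
\begin{align*}
\cos\alpha &= \frac{(\Vkt P_{1,2}-\Vkt P_{1,4})\cdot\Vkt r_{1,4}}{\|\Vkt P_{1,2}-\Vkt P_{1,4}\|}, &
\cos\beta &= \frac{(\Vkt P_{3,4}-\Vkt P_{2,3})\cdot\Vkt r_{2,3}}{\|\Vkt P_{3,4}-\Vkt P_{2,3}\|}, \\
\cos\alpha^* &= \frac{(\Vkt P_{1,2}-\Vkt P_{2,3})\cdot\Vkt r_{2,3}}{\|\Vkt P_{1,2}-\Vkt P_{2,3}\|}, &
\cos\beta^* &= \frac{(\Vkt P_{3,4}-\Vkt P_{1,4})\cdot\Vkt r_{1,4}}{\|\Vkt P_{3,4}-\Vkt P_{1,4}\|}.
\end{align*}
Here $\alpha,\beta^*$ are the two face-angles of $\mathcal{B}$ at $\go P_{1,4}$, while $\alpha^*,\beta$ are their $\sigma$-images from $\go P_{2,3}$. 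The first equation of Eq.~(\ref{eq:linsym}) makes the denominators of $\cos\alpha$ and $\cos\beta$ agree, and the second makes those of $\cos\alpha^*$ and $\cos\beta^*$ agree.

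For family (B) I would insert $\mu_{1,4}=\mu_{2,3}$, $\mu_{1,2}=\mu_{3,4}$ (Eq.~(\ref{eq1})) into the four numerators. Since the shared denominators already pair $\alpha$ with $\beta$ and $\alpha^*$ with $\beta^*$, it suffices to verify the two numerator identities $(\Vkt P_{1,2}-\Vkt P_{1,4})\cdot\Vkt r_{1,4}=(\Vkt P_{3,4}-\Vkt P_{2,3})\cdot\Vkt r_{2,3}$ and $(\Vkt P_{1,2}-\Vkt P_{2,3})\cdot\Vkt r_{2,3}=(\Vkt P_{3,4}-\Vkt P_{1,4})\cdot\Vkt r_{1,4}$; these follow from the internal line-symmetry of $\mathcal{B}$ (which swaps $\go r_{1,4}\leftrightarrow\go r_{2,3}$ and $\go r_{1,2}\leftrightarrow\go r_{3,4}$) together with Eq.~(\ref{eq1}). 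They give $\alpha=\beta$ and $\alpha^*=\beta^*$, i.e.\ a deltoid; equivalently $\Vkt r_{1,4}$ and $\sigma(\Vkt r_{2,3})$ are mirror images in the plane through $\go P_{1,4}$ spanned by the directions to $\go P_{1,2}$ and $\go P_{3,4}$, which is exactly the asserted symmetry plane. For family (A) I would instead substitute $a_1,a_2$ from Eq.~(\ref{eq:as}) and check that the opposite sides obey the unified isogonal relation of Definition~\ref{def:voss}: either $\cos\alpha^*=\cos\alpha,\ \cos\beta^*=\cos\beta$ (V-hedral) or $\cos\alpha^*=-\cos\alpha,\ \cos\beta^*=-\cos\beta$ (anti-V-hedral).

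The hard part will be the symbolic verification of the isogonal relation for family (A): after substituting the radicals of Eq.~(\ref{eq:as}) one must simplify products of square roots and show that the required identity holds independently of $\tau$ and of the remaining free parameters, which essentially forces a computer-algebra computation. A secondary difficulty is the purely bookkeeping one of fixing the cyclic orientation of the four edges and the signs of the axis directions under $\sigma$, so that \emph{opposite} versus \emph{adjacent} sides are correctly matched to Definition~\ref{def:voss}; a single wrong sign would interchange the isogonal and deltoidal conclusions. The $\tau$-independence, guaranteed a priori by the rigidity of the faces, should serve as a useful consistency check throughout.
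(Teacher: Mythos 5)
Your proposal is correct and takes essentially the same route as the paper's own proof: you identify the indicatrix sides as the face angles at a vertex, use the coupling line-symmetry to express the two $\dach{\mathcal{B}}$-angles as angles of $\mathcal{B}$ at the symmetric vertex, and then verify by computer algebra that opposite sides agree for family (A) (handling the equal-versus-supplementary ambiguity, which you do via your disjunction and the paper via its extra condition (\ref{eq:iso3})), while for family (B) the adjacent-side numerator identities vanish identically under Eq.\ (\ref{eq1}). The only cosmetic differences are that you work at $\go P_{1,4}$ instead of $\go P_{2,3}$ and add an a priori $\tau$-independence argument, neither of which changes the substance.
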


\begin{proof}
We do the proof exemplarily for the vertex\footnote{For all other vertices the proof is the same, there is just a shift of indices.} $P_{2,3}$. 
Let us start with proving the property of family (A). To do so, we  express the isogonality by a suitable set of equations and check if it is fulfilled for the values of $a_1$ and $a_2$ given in Eq.\ (\ref{eq:as}). 

We consider the pair of opposite angles $\sphericalangle\go P_{1,2}\go P_{2,3}\go F_{2,3}$ and
$\sphericalangle\go P_{3,4}\go P_{2,3}\dach{\go F}_{1,4}$, where the latter equals   $\sphericalangle\go P_{1,2}\go P_{1,4}\go F_{1,4}$ due to the line-symmetry (see Fig.\ \ref{fig4}). Therefore we can set up the following equation:
\begin{equation}\label{eq:iso1}
    \left\langle
    \frac{\Vkt P_{1,2}-\Vkt P_{2,3}}{\| \Vkt P_{1,2}-\Vkt P_{2,3}\|},\Vkt r_{2,3}
    \right\rangle^2
    -
    \left\langle
    \frac{\Vkt P_{1,2}-\Vkt P_{1,4}}{\| \Vkt P_{1,2}-\Vkt P_{1,4}\|},\Vkt r_{1,4}
    \right\rangle^2=0.
\end{equation}
Now we consider the other pair of opposite angles $\sphericalangle\go P_{3,4}\go P_{2,3}\go F_{2,3}$ and
$\sphericalangle\go P_{1,2}\go P_{2,3}\dach{\go F}_{1,4}$, where the latter equals   $\sphericalangle\go P_{3,4}\go P_{1,4}\go F_{1,4}$ due to the line-symmetry (see Fig.\ \ref{fig4}). Therefore we can set up the following equation:
\begin{equation}\label{eq:iso2}
    \left\langle
    \frac{\Vkt P_{3,4}-\Vkt P_{2,3}}{\| \Vkt P_{3,4}-\Vkt P_{2,3}\|},\Vkt r_{2,3}
    \right\rangle^2
    -
    \left\langle
    \frac{\Vkt P_{3,4}-\Vkt P_{1,4}}{\| \Vkt P_{3,4}-\Vkt P_{1,4}\|},\Vkt r_{1,4}
    \right\rangle^2=0.
\end{equation}
Due to the squaring the Eqs.\ (\ref{eq:iso1}) and (\ref{eq:iso2}) are algebraic and they are fulfilled if either the angles are equal or supplementary. But it is not allowed that one pair of opposite angles is equal and the other one supplementary. To cancel out this option we have to set up a further equation:
\begin{equation}\label{eq:iso3}
\begin{split}
&\left\langle
    \frac{\Vkt P_{1,2}-\Vkt P_{2,3}}{\| \Vkt P_{1,2}-\Vkt P_{2,3}\|},\Vkt r_{2,3}
    \right\rangle
    \left\langle
    \frac{\Vkt P_{3,4}-\Vkt P_{1,4}}{\| \Vkt P_{3,4}-\Vkt P_{1,4}\|},\Vkt r_{1,4}
    \right\rangle
    - \\
    &\left\langle
    \frac{\Vkt P_{1,2}-\Vkt P_{1,4}}{\| \Vkt P_{1,2}-\Vkt P_{1,4}\|},\Vkt r_{1,4}
    \right\rangle
    \left\langle
    \frac{\Vkt P_{3,4}-\Vkt P_{1,4}}{\| \Vkt P_{3,4}-\Vkt P_{1,4}\|},\Vkt r_{1,4}
    \right\rangle=0.
\end{split}
\end{equation}
Note that using the equalities of Eq.\ (\ref{eq:linsym}) renders this condition also algebraic. Now it can easily be checked by using e.g.\ Maple that the Eqs.\ (\ref{eq:iso1}--\ref{eq:iso3}) are fulfilled for the values of $a_1$ and $a_2$ given in Eq.\ (\ref{eq:as}). 
This finishes the first part of the proof. 

Now we prove the property of family (B).  
We consider the pair of adjacent angles $\sphericalangle\go P_{1,2}\go P_{2,3}\go F_{2,3}$ and
$\sphericalangle\go P_{1,2}\go P_{2,3}\dach{\go F}_{1,4}$, where the latter equals   $\sphericalangle\go P_{3,4}\go P_{1,4}\go F_{1,4}$ due to the line-symmetry (see Fig.\ \ref{fig5}). Therefore we can set up the following equation:
\begin{equation}\label{eq:delto1}
    \left\langle
    \frac{\Vkt P_{1,2}-\Vkt P_{2,3}}{\| \Vkt P_{1,2}-\Vkt P_{2,3}\|},\Vkt r_{2,3}
    \right\rangle
    -
    \left\langle
    \frac{\Vkt P_{3,4}-\Vkt P_{1,4}}{\| \Vkt P_{3,4}-\Vkt P_{1,4}\|},\Vkt r_{1,4}
    \right\rangle=0.
\end{equation}
Now we consider the other pair of adjacent angles $\sphericalangle\go P_{3,4}\go P_{2,3}\go F_{2,3}$ and
$\sphericalangle\go P_{3,4}\go P_{2,3}\dach{\go F}_{1,4}$, where the latter equals   $\sphericalangle\go P_{1,2}\go P_{1,4}\go F_{1,4}$ due to the line-symmetry (see Fig.\ \ref{fig5}). Therefore we can set up the following equation:
\begin{equation}\label{eq:delto2}
    \left\langle
    \frac{\Vkt P_{3,4}-\Vkt P_{2,3}}{\| \Vkt P_{3,4}-\Vkt P_{2,3}\|},\Vkt r_{2,3}
    \right\rangle
    -
    \left\langle
    \frac{\Vkt P_{1,2}-\Vkt P_{1,4}}{\| \Vkt P_{1,2}-\Vkt P_{1,4}\|},\Vkt r_{1,4}
    \right\rangle=0.
\end{equation}
Note that using the equalities of Eq.\ (\ref{eq:linsym}) renders the numerators of the Eqs.\ (\ref{eq:delto1}) and (\ref{eq:delto2}) algebraic, which read as follows: 
\begin{equation*}
    \begin{split}
    &(\mu_{1,4} - \mu_{1,2} - \mu_{2,3} + \mu_{3,4})a_2 + \mu_{1,4} + \mu_{1,2} - \mu_{2,3} - \mu_{3,4}=0, \\
    &(\mu_{1,4} + \mu_{1,2} - \mu_{2,3} - \mu_{3,4})a_1 +\mu_{1,4} - \mu_{1,2} - \mu_{2,3} + \mu_{3,4}=0.
    \end{split}    
\end{equation*}
For family (B) these two equations are fulfilled identically. \hfill $\BewEnde$
\end{proof}

%%%%%%%%%%%%%%%%%%%%%%%%%%%%%%%%%%%%%%%%%%%%%%%%%%%%%%%%%%%%%%%%%%%%%%%%%%%%%%%%%%%%%%%%%%%%%%%%%%%%%%%%%%%%%%%%%%%%%%%%%%%%%

\subsection{Non-symmetric arrangements}\label{subsec:nonsym}

In this section we do not assume that the two Bennetts  $\mathcal{B}$ and $\dach{\mathcal{B}}$ are linked by a symmetry relation. A necessary and sufficient set of algebraic conditions for such a general (non-symmetric) arrangement can be obtained by the following distance geometric approach: 

Let us consider the Bennett $\mathcal{B}$ given in Section \ref{subsec:sym}, where the location of the point $\go P_{i,j}$ depends on the motion parameter $\tau$ and the design parameters $a_1,a_2,k,\mu_{i,j}$. 

Let  us assume that we have a second Bennett  $\bar{\mathcal{B}}$ which is parametrized and coordinatized analogously; i.e.\ the location of the points $\bar{\go P}_{i,j}$ depends on the motion parameter $\bar \tau$ and the design parameters $\bar a_1,\bar a_2,\bar k,\bar \mu_{i,j}$. 

The two  Bennetts  $\mathcal{B}$ and $\bar{\mathcal{B}}$ can be flexible coupled over the two  tetrahedra $\go P_{1,4},\go P_{1,2}, \go P_{2,3}, \go P_{3,4}$ and $\bar{\go P}_{1,4},\bar{\go P}_{1,2}, \bar{\go P}_{2,3}, \bar{\go P}_{3,4}$ if for each $\tau$ there exists a $\bar\tau$ such that the two tetrahedra are isometric. Then the two tetrahedra can either by identified by an orientation-preserving or orientation-reversing isometry $\delta$; where $\dach{\mathcal{B}}:=\delta(\bar{\mathcal{B}})$. The algebraic conditions for this isometry read as follows:
\begin{equation}\label{eq:4ohne}
\begin{split}
&\|\Vkt P_{1,4}-\Vkt P_{1,2} \|^2 -  \|\bar{\Vkt P}_{1,4}-\bar{\Vkt P}_{1,2} \|^2=0 \\
&\|\Vkt P_{1,2}-\Vkt P_{2,3} \|^2 -  \|\bar{\Vkt P}_{1,2}-\bar{\Vkt P}_{2,3} \|^2=0 \\
&\|\Vkt P_{2,3}-\Vkt P_{3,4} \|^2 -  \|\bar{\Vkt P}_{2,3}-\bar{\Vkt P}_{3,4} \|^2=0 \\
&\|\Vkt P_{3,4}-\Vkt P_{1,4} \|^2 -  \|\bar{\Vkt P}_{3,4}-\bar{\Vkt P}_{1,4} \|^2=0 
\end{split}
\end{equation}
and
\begin{equation}\label{eq:2mit}
\begin{split}
&\|\Vkt P_{1,4}-\Vkt P_{2,3} \|^2 -  \|\bar{\Vkt P}_{1,4}-\bar{\Vkt P}_{2,3} \|^2=0 \\
&\|\Vkt P_{1,2}-\Vkt P_{3,4} \|^2 -  \|\bar{\Vkt P}_{1,2}-\bar{\Vkt P}_{3,4} \|^2=0.
\end{split}
\end{equation}
Note that only the two conditions of Eq.\ (\ref{eq:2mit}) depend on the motion parameters $\tau$ and $\bar\tau$. They are of the form: 
\begin{equation*}
c_{22}\tau^2\bar\tau^2+ 
c_{21}\tau^2\bar\tau+ 
c_{12}\tau\bar\tau^2+ 
c_{20}\tau^2+ 
c_{02}\bar\tau^2+ 
c_{10}\tau+ 
c_{01}\bar\tau+ 
c_{00}=0. 
\end{equation*}
Their resultant with respect to $\bar\tau$ yields an equation 
of degree 8 in $\tau$; i.e.\ 
\begin{equation*}
c_8\tau^8+c_7\tau^7+c_6\tau^6+c_5\tau^5+c_4\tau^4+c_3\tau^3+c_2\tau^2+c_1\tau+c_0=0
\end{equation*}
where the $c_i$'s depend on the 14 unknowns 
\begin{equation*}
a_1,a_2,k,\mu_{1,4},\mu_{1,2},\mu_{2,3},\mu_{3,4},\bar a_1,\bar a_2,\bar k,\bar \mu_{1,4},\bar \mu_{1,2},\bar \mu_{2,3},\bar \mu_{3,4}.
\end{equation*}
This equation has to be fulfilled independent of $\tau$ implying the nine conditions $c_8=c_7=\ldots=c_0=0$. Together with the four conditions of Eq.\ (\ref{eq:4ohne}) we have 13 equations in 14 unknowns, but we have not to forget that the scaling factor is not fixed yet. Therefore we can set e.g.\ $k=1$. This gives us now a square system of equations and all its solutions\footnote{Clearly, due to the generality also the solutions of Section \ref{subsec:sym} are contained within this solution set as well as trivial solutions $\mathcal{B}=\dach{\mathcal{B}}$.} correspond to flexible arrangements of two Bennetts.

We were not able to solve this system of necessary and sufficient conditions due to its complexity. 
However, we succeeded in finding one further family (C)
which is described in the next theorem.

\begin{figure}[t]
\begin{center}
\begin{overpic}
   [height=47mm]{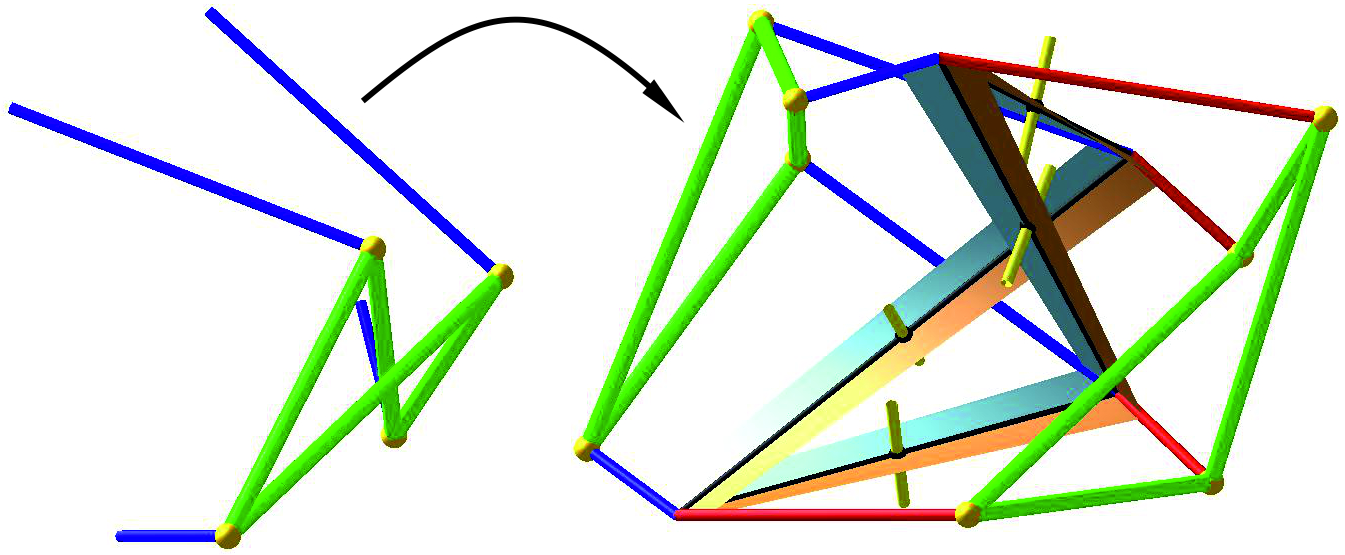}
\begin{scriptsize}
\put(19,0.5){$\bar{\go F}_{1,2}$}
\put(38.5,20){$\bar{\go F}_{1,4}$}
\put(27,6){$\bar{\go F}_{3,4}$}
\put(22,21){$\bar{\go F}_{2,3}$}
\put(38,37){$\delta$}
\put(39,6){$\dach{\go F}_{1,2}$}
\put(51,39){$\dach{\go F}_{1,4}$}
\put(54.5,29.5){$\dach{\go F}_{2,3}$}
\put(60.5,31.5){$\dach{\go F}_{3,4}$}
\put(67,38.5){$\go P_{3,4}$}
\put(48,0.5){$\go P_{1,2}$}
\put(83,31){$\go P_{1,4}$}
\put(84.5,12){$\go P_{2,3}$}
\put(96.5,34.5){$\go F_{3,4}$}
\put(87,22.5){$\go F_{1,4}$}
\put(91.5,4.5){$\go F_{2,3}$}
\put(69,0.5){$\go F_{1,2}$}
\end{scriptsize}     
  \end{overpic} 
\end{center}	
\caption{
The two  Bennetts  $\mathcal{B}$ and $\bar{\mathcal{B}}$ can be flexible coupled by on orientation-preserving isometry $\delta$ with $\dach{\mathcal{B}}:=\delta(\bar{\mathcal{B}})$. 
The resulting flexible bi-Bennett belongs to family (C) of Theorem \ref{thm:nonsym}. We only labeled the points (but not the rotation axes) to avoid an overloading of the figure. In addition the axes of the half-turns described in Theorem \ref{thm:property2} are displayed in yellow. This illustration corresponds to the following parameters: $k=1$, $s=1$ $\mu_{2,3}=\tfrac{2}{3}$, $\mu_{3,4}=\tfrac{1}{4}$, $a_1=\tfrac{1}{2}$, $a_2=\tfrac{1}{3}$, $\tau=\tfrac{9}{10}$ and 
$\bar\tau=-\tfrac{\sqrt{195165444215}}{357245}$.}
  \label{fig6}
\end{figure}

\begin{theorem}\label{thm:nonsym}
    Besides the factor of similarity there exists a 4-dimensional family (C) of non-symmetric flexible arrangements of two Bennett tubes. Under consideration of $s\in\left\{-1,+1\right\}$ this family (C) is given by:  
    \begin{equation*}
    \begin{split}
    &\bar a_1= a_1, \quad 
    \bar a_2= a_2, \quad
    \mu_{2,3}=\mu_{1,4}, \quad 
     \mu_{3,4}=\mu_{1,2},\quad \bar k= k \\  
     &\bar \mu_{1,4}=s\mu_{1,2}, \quad
    \bar \mu_{1,2}=s\mu_{1,4}, \quad
    \bar \mu_{2,3}=s\mu_{1,2}, \quad
     \bar \mu_{3,4}=s\mu_{1,4}.
    \end{split}
    \end{equation*}
\end{theorem}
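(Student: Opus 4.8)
The plan is to proceed by direct verification, in the spirit of the proof of Theorem \ref{thm:property}: substitute the relations defining family (C) into the thirteen conditions assembled above (the four of Eq.\ (\ref{eq:4ohne}) together with the nine coefficient equations $c_8=\dots=c_0=0$ coming from the resultant) and show that they all hold identically in the remaining free parameters, so that for every $\tau$ a matching $\bar\tau$ exists.

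First I would dispose of the four motion-independent conditions of Eq.\ (\ref{eq:4ohne}). Family (C) forces both Bennetts to be of family-(B) type: $\mathcal{B}$ satisfies $\mu_{1,4}=\mu_{2,3}$, $\mu_{1,2}=\mu_{3,4}$, which is exactly Eq.\ (\ref{eq1}), while $\bar{\mathcal{B}}$ satisfies $\bar\mu_{1,4}=\bar\mu_{2,3}=s\mu_{1,2}$, $\bar\mu_{1,2}=\bar\mu_{3,4}=s\mu_{1,4}$, again of the shape of Eq.\ (\ref{eq1}). Hence both skew quadrilaterals $\go P_{1,4},\go P_{1,2},\go P_{2,3},\go P_{3,4}$ and $\bar{\go P}_{1,4},\bar{\go P}_{1,2},\bar{\go P}_{2,3},\bar{\go P}_{3,4}$ are skew isograms, so opposite sides already agree within each tetrahedron. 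Since $\bar a_1=a_1$, $\bar a_2=a_2$, $\bar k=k$ give the two Bennetts identical twist angles and common-normal lengths, the squared length of an edge joining a point at parameter $\mu$ on one axis to one at parameter $\mu'$ on the adjacent axis is the motion-independent, \emph{symmetric} form $d^2+\mu^2+\mu'^2-2\mu\mu'\langle\Vkt r,\Vkt r'\rangle$ with $(d,\langle\Vkt r,\Vkt r'\rangle)$ depending only on which adjacent pair (i.e.\ on $\alpha_1$ or $\alpha_2$). Because $\bar{\mathcal{B}}$'s parameters are obtained from those of $\mathcal{B}$ by swapping the two members of each adjacent pair and rescaling by $s$ (with $s^2=1$), this symmetric quadratic is invariant, so the two surviving side conditions $\|\go P_{1,4}-\go P_{1,2}\|=\|\bar{\go P}_{1,4}-\bar{\go P}_{1,2}\|$ and $\|\go P_{1,2}-\go P_{2,3}\|=\|\bar{\go P}_{1,2}-\bar{\go P}_{2,3}\|$ become identities; the remaining two then follow from the isogram property. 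Thus all of Eq.\ (\ref{eq:4ohne}) holds identically.

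The real work lies in the two diagonal conditions of Eq.\ (\ref{eq:2mit}), which alone involve $\tau$ and $\bar\tau$. Using the parametrization of Section \ref{sec:dhproc}, each diagonal of $\mathcal{B}$ takes the form $f^2(\tau)=A(\tau)+2\mu B(\tau)+\mu^2 C(\tau)$ with rational coefficient functions $A,B,C$ built from $\|\Vkt F-\Vkt F'\|^2$, $\langle\Vkt F-\Vkt F',\Vkt r-\Vkt r'\rangle$ and $\|\Vkt r-\Vkt r'\|^2$ of the relevant opposite-axis pair, and likewise for $\bar{\mathcal{B}}$ with the \emph{same} functions evaluated at $\bar\tau$ and with the $s$-scaled parameters. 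I would substitute family (C) into the two resulting biquadratic relations in $(\tau,\bar\tau)$, form their resultant with respect to $\bar\tau$ as in the excerpt, and verify with a CAS that every one of the nine coefficients $c_8,\dots,c_0$ of the degree-$8$ polynomial in $\tau$ vanishes identically in $a_1,a_2,\mu_{1,4},\mu_{1,2}$. Identical vanishing of this resultant is exactly the statement that for each $\tau$ the two biquadratics in $\bar\tau$ share a common root $\bar\tau=\bar\tau(\tau)$; combined with the identities of Eq.\ (\ref{eq:4ohne}), this supplies, for every pose of $\mathcal{B}$, an isometric tetrahedron of $\bar{\mathcal{B}}$, hence a one-parameter flex of the coupling. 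The family is $4$-dimensional (after normalising $k=1$ the free parameters are $a_1,a_2,\mu_{1,4},\mu_{1,2}$) and generically neither line- nor plane-symmetric, so it is genuinely distinct from families (A) and (B).

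I expect the elimination in the third paragraph to be the main obstacle: the $c_i$ are bulky rational expressions and a naive resultant is unwieldy. The route through is to exploit the structure already in hand — equal design parameters for the two Bennetts, the isogram (family-(B)) symmetry of each tetrahedron, and the index-swap-with-sign linking the $\mu$'s — to pre-simplify $A,B,C$ before eliminating. As a geometric cross-check one can identify the half-turn axes of Theorem \ref{thm:property2} (displayed in Fig.\ \ref{fig6}) that realise the congruence between the two tetrahedra and thereby fix the branch $\bar\tau(\tau)$; reality of this branch, which guarantees an actual geometric realisation, is corroborated by the explicit instance in Fig.\ \ref{fig6}.
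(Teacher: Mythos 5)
Your proposal is correct in substance and stays inside the paper's own sufficiency framework from Section \ref{subsec:nonsym} (the four conditions of Eq.\ (\ref{eq:4ohne}) plus the coupling conditions of Eq.\ (\ref{eq:2mit})), but it processes the motion-dependent part differently. The paper never forms the resultant: after substituting family (C) it observes that the two conditions of Eq.\ (\ref{eq:2mit}) become \emph{identical} and factor as $(a_1^2+1)(a_2+1)U$ with $U$ an explicit biquadratic in $(\tau,\bar\tau)$, so flexibility follows at once because a single relation $U=0$ ties $\bar\tau$ to $\tau$. Your resultant check would indeed succeed --- trivially so, precisely because the resultant of two coinciding polynomials vanishes identically --- but it is computationally heavier and hides the structure the paper exploits: the explicit factor $U$ is what produces the four motion branches (two choices of $s$ times two roots $\bar\tau$ of $U=0$) discussed after the theorem, and $U$ reappears as a factor in the verifications inside the proof of Theorem \ref{thm:property2}, so discovering it is not a luxury but a reusable lemma. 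On the other side, your disposal of Eq.\ (\ref{eq:4ohne}) is cleaner than the paper's ``can easily be checked'': noting that adjacent edge lengths are the motion-independent symmetric forms $d_i^2+\mu^2+\mu'^2-2\mu\mu'\cos(\alpha_i)$, invariant under the index swap combined with the sign $s$ (as $s^2=1$), is a genuine conceptual argument. One caveat affects both proofs but is more exposed in yours: identical vanishing of the resultant only guarantees a common root $\bar\tau$ over $\CC$, and the geometric realization needs a \emph{real} $\bar\tau$ on an interval of $\tau$; you defer this to the example of Fig.\ \ref{fig6}, whereas the paper's explicit $U$ at least makes the reality condition directly checkable by solving $U=0$ for $\bar\tau^2$ and requiring non-negativity, which holds only in part of the parameter region.
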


\begin{proof}
We eliminate the factor of similarity by setting $k=1$, thus we remain with the four free parameters $a_1,a_2,\mu_{1,2},\mu_{1,4}$. 
It can easily be checked that the four conditions of Eq.\ (\ref{eq:4ohne}) are fulfilled identically. Moreover, it turns out that the remaining two conditions of Eq.\ (\ref{eq:2mit}) are identical and split up into $(a_1^2+1)(a_2+1)U$ where the factor $U$ equals:
\begin{equation*}
\begin{split}
    &(\mu_{1,4}^2-\mu_{1,2}^2)(a_1-a_2)^2\tau^2\bar{\tau}^2 + 
    \left[(\mu_{1,4}^2-\mu_{1,2}^2)(a_1^2+a_2^2)+2(\mu_{1,4}^2+\mu_{1,2}^2+2)a_1a_2\right]\tau^2 \\
     &\left[(\mu_{1,4}^2-\mu_{1,2}^2)(a_1^2+a_2^2)-2(\mu_{1,4}^2+\mu_{1,2}^2+2)a_1a_2\right]\bar{\tau}^2 + (\mu_{1,4}^2-\mu_{1,2}^2)(a_1+a_2)^2
\end{split}    
\end{equation*}
for both options of $s$. This already proves that $\mathcal{B}$ and $\bar{\mathcal{B}}$ can be flexible arranged.
\hfill $\BewEnde$
\end{proof}

Note that in addition to the option $s\in\left\{-1,+1\right\}$ equation $U=0$ has two solutions for $\bar\tau$. Thus in total one can distinguish four branches. For all four branches the tetrahedra $\go P_{1,4},\go P_{1,2}, \go P_{2,3}, \go P_{3,4}$ and $\bar{\go P}_{1,4},\bar{\go P}_{1,2}, \bar{\go P}_{2,3}, \bar{\go P}_{3,4}$ have the same orientation, which can easily be checked (e.g.\ by using Maple) verifying the condition:
\begin{equation*}
    \det (\Vkt  P_{1,2} -\Vkt P_{1,4},
    \Vkt  P_{2,3} -\Vkt P_{1,4},
    \Vkt  P_{3,4} -\Vkt P_{1,4})-
    \det (\bar{\Vkt  P}_{1,2} -\bar{\Vkt P}_{1,4},
    \bar{\Vkt  P}_{2,3} -\bar{\Vkt P}_{1,4},
    \bar{\Vkt  P}_{3,4} -\bar{\Vkt P}_{1,4})=0
\end{equation*}
Therefore, one can always use an rigid-body motion $\delta$ to bring the two Bennetts in place; i.e.\ $\dach{\mathcal{B}}:=\delta(\bar{\mathcal{B}})$ with 
$\go P_{i,j}=\dach {\go P}_{i,j} =\delta(\bar{\go P}_{i,j})$.

\begin{theorem}\label{thm:property2}
The spherical indicatrices of the spherical 4R-loops around opposite centers $\go P_{i,j}$ of family (C) are related by a direct isometry. 
The spherical indicatrices of the spherical 4R-loops around adjacent centers $\go P_{i,j}$ of family (C) correspond to two motion modes of the same spherical 4-bar. Moreover, two adjacent vertices $\go P_{i,j}$ and $\go P_{j,k}$ are related by a half-turn $\rho$ with:
$\go P_{i,j}\mapsto \go P_{j,k}$
$\go F_{i,j}\mapsto \dach{\go F}_{j,k}$
$\dach{\go F}_{i,j}\mapsto \go F_{j,k}$.
\end{theorem}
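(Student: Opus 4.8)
The plan is to read off explicit coordinates from the Denavit--Hartenberg setup of Section \ref{sec:dhproc} and specialise them to family (C). Concretely, I would compute $\Vkt F_{i,j}$, $\Vkt r_{i,j}$ from Eq.\ (\ref{eq:Fandr}) and the vertices $\Vkt P_{i,j}$ from Eq.\ (\ref{eq:pij}) for $\mathcal{B}$ at the motion parameter $\tau$, and the analogous data $\bar{\Vkt F}_{i,j}$, $\bar{\Vkt r}_{i,j}$, $\bar{\Vkt P}_{i,j}$ for $\bar{\mathcal{B}}$ at $\bar\tau$. Since $\dach{\mathcal{B}}=\delta(\bar{\mathcal{B}})$, I determine the orientation-preserving gluing isometry $\delta$ from the requirement $\delta(\bar{\go P}_{i,j})=\go P_{i,j}$ (consistent by the orientation check following Theorem \ref{thm:nonsym}), and set $\dach{\go F}_{i,j}=\delta(\bar{\go F}_{i,j})$ and $\dach{\Vkt r}_{i,j}$ equal to the linear part of $\delta$ applied to $\bar{\Vkt r}_{i,j}$. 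The spherical indicatrix attached to $\go P_{i,j}$ is then the spherical quadrilateral spanned by the four unit directions $\Vkt r_{i,j}$, $\dach{\Vkt r}_{i,j}$ and the two edge directions to the neighbouring vertices.

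The core of the statement is the half-turn $\rho$, which I would exhibit explicitly. For an edge, say $\go P_{1,2}\go P_{2,3}$ (so $i=1,j=2,k=3$), I test the candidate by the criterion for the existence of a half-turn realising the three prescribed correspondences: the three midpoints $\tfrac12(\Vkt P_{1,2}+\Vkt P_{2,3})$, $\tfrac12(\Vkt F_{1,2}+\dach{\Vkt F}_{2,3})$, $\tfrac12(\dach{\Vkt F}_{1,2}+\Vkt F_{2,3})$ must be collinear (this defines the axis $\ell$), and the three connecting vectors $\Vkt P_{1,2}-\Vkt P_{2,3}$, $\Vkt F_{1,2}-\dach{\Vkt F}_{2,3}$, $\dach{\Vkt F}_{1,2}-\Vkt F_{2,3}$ must each be orthogonal to the direction of $\ell$; these two facts are together equivalent to $\rho$ being the half-turn about $\ell$ with the claimed action. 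I would additionally verify that $\rho$ interchanges, as points, the two non-incident vertices ($\go P_{1,4}\leftrightarrow\go P_{3,4}$ here), so that $\rho$ carries the entire vertex star at $\go P_{1,2}$ onto the star at $\go P_{2,3}$. The remaining three edges follow by the index shift. I expect the \emph{main obstacle} to be producing $\delta$ in closed form—hence $\dach{\go F}_{i,j}$ and $\dach{\Vkt r}_{i,j}$—and then taming the resulting expressions in $a_1,a_2,\tau,\bar\tau,\mu_{1,2},\mu_{1,4}$; it is likely cleaner to check the collinearity and orthogonality as polynomial identities modulo the relation $U=0$ of Theorem \ref{thm:nonsym} (which couples $\tau$ and $\bar\tau$) than to build $\delta$ symbolically.

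Granting $\rho$, the first assertion follows by composition. Writing $\rho_1$ for the half-turn of the edge $\go P_{1,4}\go P_{1,2}$ and $\rho_2$ for that of $\go P_{1,2}\go P_{2,3}$, the product $g=\rho_2\rho_1$ is a direct isometry of space. Tracking the two axis-feet through the role-swapping rule $\go F\mapsto\dach{\go F}$, $\dach{\go F}\mapsto\go F$ shows that the two swaps cancel, so $g$ sends $\go F_{1,4}\mapsto\go F_{2,3}$ and $\dach{\go F}_{1,4}\mapsto\dach{\go F}_{2,3}$ (roles \emph{preserved}), while the vertex permutations compose to the opposite-vertex exchange $\go P_{1,4}\leftrightarrow\go P_{2,3}$, $\go P_{1,2}\leftrightarrow\go P_{3,4}$. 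Hence $g$ carries the full star at $\go P_{1,4}$ onto that at $\go P_{2,3}$; its linear part is a rotation of the unit sphere taking one indicatrix to the opposite one, which is exactly the claimed direct isometry.

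The second assertion is then the local version. A single $\rho$ is already a direct isometry whose linear part rotates the indicatrix at $\go P_{i,j}$ onto the one at the adjacent $\go P_{j,k}$, so the four spherical side lengths agree and the two vertex figures are the \emph{same} spherical 4-bar. The difference from the opposite case is that here $\rho$ interchanges the $\mathcal{B}$- and $\dach{\mathcal{B}}$-axes ($\go F_{i,j}\mapsto\dach{\go F}_{j,k}$, $\dach{\go F}_{i,j}\mapsto\go F_{j,k}$); this role reversal places the image on the second solution of the spherical input--output equation for the given driving angle, i.e.\ on the other motion mode, as asserted.
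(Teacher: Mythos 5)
Your construction plan (explicit DH coordinates, working modulo $U=0$, the midpoint-collinearity and orthogonality criterion for exhibiting the half-turn) is sound and close in spirit to the paper's computations, but it contains one claim that is false and that both of your main conclusions lean on: that $\rho$ ``interchanges, as points, the two non-incident vertices ($\go P_{1,4}\leftrightarrow\go P_{3,4}$)'' and hence ``carries the entire vertex star at $\go P_{1,2}$ onto the star at $\go P_{2,3}$.'' The paper verifies exactly the opposite (for the index-shifted edge $\go P_{2,3}\go P_{3,4}$): the half-turn does \emph{not} map $\go P_{1,2}$ to $\go P_{1,4}$, since that would force $\sphericalangle\go P_{1,2}\go P_{2,3}\go P_{3,4}=\sphericalangle\go P_{1,4}\go P_{3,4}\go P_{2,3}$, and the corresponding inner-product identity fails under $U=0$; instead $\rho(\go P_{1,2})$ and $\go P_{1,4}$ are related by a reflection in the plane spanned by $\go P_{3,4},\go F_{3,4},\dach{\go F}_{3,4}$. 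This failure is not a blemish to be fixed --- it \emph{is} the content of the theorem: precisely because $\rho$ carries only three of the four legs of the star (the shared edge and the two axis feet, with roles of $\go F$ and $\dach{\go F}$ swapped) while the fourth leg lands on the mirror image of its intended target, the adjacent indicatrices are the same spherical 4-bar but in two different motion modes. Note also that to conclude ``same spherical 4-bar'' one needs the equality of all four spherical side lengths, which the paper proves separately (its Step 1, four angle identities reduced to $\bar{\mathcal{B}}$); your proposal omits this. And your text is internally inconsistent on the key point: if the full star were carried over by a direct isometry, the adjacent indicatrices would be directly congruent, contradicting your own final paragraph asserting they sit in different modes.

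The error propagates into your opposite-center argument. In $g=\rho_2\rho_1$ the feet do track as you say ($\go P_{1,4}\mapsto\go P_{2,3}$, $\go F_{1,4}\mapsto\go F_{2,3}$, $\dach{\go F}_{1,4}\mapsto\dach{\go F}_{2,3}$), but the vertex permutation does not: $g(\go P_{1,2})=\rho_2(\rho_1(\go P_{1,2}))=\rho_2(\go P_{1,4})\neq\go P_{3,4}$ in general, so $g$ does not carry the full star at $\go P_{1,4}$ onto the star at $\go P_{2,3}$, and the direct isometry of opposite indicatrices cannot be read off from $g$ alone. The paper instead obtains this claim by propagation on the sphere: adjacent indicatrices differ by a branch (mode) switch of one and the same spherical 4-bar, and two successive switches restore the original mode, so opposite indicatrices are directly isometric. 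Finally, regarding your declared ``main obstacle'': the paper never constructs $\delta$ symbolically; it expresses $\bar{\Vkt F}_{2,3}$ and $\bar{\Vkt F}_{3,4}$ in affine coordinates with respect to the tetrahedron $\bar{\go P}_{1,4},\bar{\go P}_{1,2},\bar{\go P}_{2,3},\bar{\go P}_{3,4}$ and transplants those coefficients to $\go P_{1,4},\go P_{1,2},\go P_{2,3},\go P_{3,4}$, which yields $\dach{\Vkt F}_{2,3}$, $\dach{\Vkt F}_{3,4}$ directly --- a trick worth adopting if you rework the computation.
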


\begin{proof}
We do the proof exemplarily for the adjacent vertices\footnote{For all other vertices the proof is the same, there is just a shift of indices.} $\go P_{2,3}$ and $\go P_{3,4}$ and split it up into three steps: \medskip

\noindent
{\bf Step 1)} In a first step we show equality relations between angles spanned by adjacent rotary joints 
around the vertices $\go P_{2,3}$ and $\go P_{3,4}$. As these angles do not depend on the motion parameter $\tau$ there is no need to solve $U=0$ for $\bar\tau$. 
\begin{enumerate}
    \item 
 We want to show that  $\sphericalangle\go P_{3,4}\go P_{2,3}\go F_{2,3}$ equals  
 $\sphericalangle\go P_{2,3}\go P_{3,4}\dach{\go F}_{3,4}$. As the latter angle equals  
 $\sphericalangle\bar {\go P}_{2,3}\bar{\go P}_{3,4}\bar{\go F}_{3,4}$, we only have to check the equation
\begin{equation}\label{eq:gleich1}
    \left\langle
    \Vkt P_{3,4}-\Vkt P_{2,3},
    \Vkt F_{2,3}-\Vkt P_{2,3}
    \right\rangle
    -
    \left\langle
    \bar{\Vkt P}_{2,3}-\bar{\Vkt P}_{3,4},
    \bar{\Vkt F}_{3,4}-\bar{\Vkt P}_{3,4}
    \right\rangle
    =0
\end{equation}
as $\| \Vkt P_{3,4}-\Vkt P_{2,3}\|=\|\bar{\Vkt P}_{2,3}-\bar{\Vkt P}_{3,4}\|$ and 
$\|\Vkt F_{2,3}-\Vkt P_{2,3}\|=\| \bar{\Vkt F}_{3,4}-\bar{\Vkt P}_{3,4} \|$
hold true. 
\item 
Further, we want to show that  $\sphericalangle\go P_{1,2}\go P_{2,3}\go F_{2,3}$ equals  
 $\sphericalangle\go P_{1,4}\go P_{3,4}\dach{\go F}_{3,4}$. As the latter angle equals  
 $\sphericalangle\bar {\go P}_{1,4}\bar{\go P}_{3,4}\bar{\go F}_{3,4}$, we only have to check the equation
\begin{equation}\label{eq:gleich2}
    \left\langle
    \Vkt P_{1,2}-\Vkt P_{2,3},
    \Vkt F_{2,3}-\Vkt P_{2,3}
    \right\rangle
    -
    \left\langle
    \bar{\Vkt P}_{1,4}-\bar{\Vkt P}_{3,4},
    \bar{\Vkt F}_{3,4}-\bar{\Vkt P}_{3,4}
    \right\rangle
    =0
\end{equation}
as $\| \Vkt P_{1,2}-\Vkt P_{2,3}\|=\|\bar{\Vkt P}_{1,4}-\bar{\Vkt P}_{3,4}\|$ and 
$\|\Vkt F_{2,3}-\Vkt P_{2,3}\|=\| \bar{\Vkt F}_{3,4}-\bar{\Vkt P}_{3,4} \|$
hold true. 
 \item 
 We also want to show that $\sphericalangle\go P_{2,3}\go P_{3,4}\go F_{3,4}$ equals $\sphericalangle\go P_{3,4}\go P_{2,3}\dach{\go F}_{2,3}$. As the latter angle equals  
 $\sphericalangle\bar {\go P}_{3,4}\bar{\go P}_{2,3}\bar{\go F}_{2,3}$, we only have to check the equation
\begin{equation}\label{eq:gleich3}
    \left\langle
    \Vkt P_{2,3}-\Vkt P_{3,4},
    \Vkt F_{3,4}-\Vkt P_{3,4}
    \right\rangle
    -
    \left\langle
    \bar{\Vkt P}_{3,4}-\bar{\Vkt P}_{2,3},
    \bar{\Vkt F}_{2,3}-\bar{\Vkt P}_{2,3}
    \right\rangle
    =0
\end{equation}
as $\| \Vkt P_{2,3}-\Vkt P_{3,4}\|=\|\bar{\Vkt P}_{3,4}-\bar{\Vkt P}_{2,3}\|$ and 
$\|\Vkt F_{3,4}-\Vkt P_{3,4}\|=\| \bar{\Vkt F}_{2,3}-\bar{\Vkt P}_{2,3} \|$
hold true. 
\item Finally, we want to show that  
$\sphericalangle\go P_{1,4}\go P_{3,4}{\go F}_{3,4}$ equals
$\sphericalangle\go P_{1,2}\go P_{2,3}\dach{\go F}_{2,3}$. As the latter angle equals  
 $\sphericalangle\bar {\go P}_{1,2}\bar{\go P}_{2,3}\bar{\go F}_{2,3}$, we only have to check the equation
\begin{equation}\label{eq:gleich4}
    \left\langle
    \Vkt P_{1,4}-\Vkt P_{3,4},
    \Vkt F_{3,4}-\Vkt P_{3,4}
    \right\rangle
    -
    \left\langle
    \bar{\Vkt P}_{1,2}-\bar{\Vkt P}_{2,3},
    \bar{\Vkt F}_{2,3}-\bar{\Vkt P}_{2,3}
    \right\rangle
    =0
\end{equation}
as $\| \Vkt P_{1,4}-\Vkt P_{2,3}\|=\|\bar{\Vkt P}_{1,2}-\bar{\Vkt P}_{2,3}\|$ and 
$\|\Vkt F_{3,4}-\Vkt P_{3,4}\|=\| \bar{\Vkt F}_{2,3}-\bar{\Vkt P}_{2,3} \|$
hold true. 
\end{enumerate}
Eqs.\ (\ref{eq:gleich1}--\ref{eq:gleich4}) can easily be verified using e.g.\ Maple. \medskip 

\noindent
{\bf Step 2)}
In this step we prove that the angles 
$\sphericalangle\go F_{2,3}\go P_{2,3}\dach{\go F}_{2,3}$ and 
$\sphericalangle\go F_{3,4}\go P_{3,4}\dach{\go F}_{3,4}$
are equal. This is a little more complicated as these angles are enclosed by opposite rotary angles of the 4R loops around the vertices $\go P_{2,3}$ and $\go P_{3,4}$. Therefore, we cannot reduce the problem to $\mathcal{B}$ and 
$\bar{\mathcal{B}}$ but has to consider the spatial arrangement  $\mathcal{B}$ and  $\dach{\mathcal{B}}$, which also depends on the motion parameter $\tau$. 
For the proof we do not solve  $U=0$ for $\bar\tau$ but we take this condition in a later phase of the proof into account. 

As the two tetrahedra $\go P_{1,4},\go P_{1,2}, \go P_{2,3}, \go P_{3,4}$ and $\bar{\go P}_{1,4},\bar{\go P}_{1,2}, \bar{\go P}_{2,3}, \bar{\go P}_{3,4}$ are related by a direct isometry we can compute $\dach{\Vkt F}_{2,3}$ and  $\dach{\Vkt F}_{3,4}$ as follows:
\begin{enumerate}
    \item 
We express $\bar{\Vkt F}_{2,3}$ as a linear combination of the following form:
\begin{equation*}
\bar{\Vkt F}_{2,3}=\bar{\Vkt P}_{2,3} + 
\xi_{2,3}(\bar{\Vkt P}_{3,4}-\bar{\Vkt P}_{2,3})+
\eta_{2,3}(\bar{\Vkt P}_{1,2}-\bar{\Vkt P}_{2,3})+
\zeta_{2,3}(\bar{\Vkt P}_{1,4}-\bar{\Vkt P}_{3,4})
\end{equation*}
The corresponding three equations on the coordinate level can be solved for $\xi_{2,3},\eta_{2,3},\zeta_{2,3}$. Based on the obtained values we can compute 
$\dach{\Vkt F}_{2,3}$ as follows:
\begin{equation}\label{eq:fdach23}
\dach{\Vkt F}_{2,3}={\Vkt P}_{2,3} + 
\xi_{2,3}({\Vkt P}_{3,4}-{\Vkt P}_{2,3})+
\eta_{2,3}({\Vkt P}_{1,2}-{\Vkt P}_{2,3})+
\zeta_{2,3}({\Vkt P}_{1,4}-{\Vkt P}_{3,4})
\end{equation}
\item 
We express $\bar{\Vkt F}_{3,4}$ as a linear combination of the following form:
\begin{equation*}
\bar{\Vkt F}_{3,4}=\bar{\Vkt P}_{2,3} + 
\xi_{3,4}(\bar{\Vkt P}_{2,3}-\bar{\Vkt P}_{3,4})+
\eta_{3,4}(\bar{\Vkt P}_{1,4}-\bar{\Vkt P}_{3,4})+
\zeta_{3,4}(\bar{\Vkt P}_{1,2}-\bar{\Vkt P}_{2,3})
\end{equation*}
The corresponding three equations on the coordinate level can be solved for $\xi_{3,4},\eta_{3,4},\zeta_{3,4}$. Based on the obtained values we can compute 
$\dach{\Vkt F}_{3,4}$ as follows:
\begin{equation}\label{eq:fdach34}
\dach{\Vkt F}_{3,4}={\Vkt P}_{2,3} + 
\xi_{3,4}({\Vkt P}_{2,3}-{\Vkt P}_{3,4})+
\eta_{3,4}({\Vkt P}_{1,4}-{\Vkt P}_{3,4})+
\zeta_{3,4}({\Vkt P}_{1,2}-{\Vkt P}_{2,3})
\end{equation}
\end{enumerate}
Using the expressions of Eq.\ (\ref{eq:fdach23}) and 
Eq.\ (\ref{eq:fdach34}) we can prove the equality of the angles $\sphericalangle\go F_{2,3}\go P_{2,3}\dach{\go F}_{2,3}$ and 
$\sphericalangle\go F_{3,4}\go P_{3,4}\dach{\go F}_{3,4}$ by computing
\begin{equation}\label{eq:diag}
    \left\langle
    \Vkt F_{2,3}-\Vkt P_{2,3},
    \dach{\Vkt F}_{2,3}-\Vkt P_{2,3}
    \right\rangle
    -
    \left\langle
    \Vkt F_{3,4}-\Vkt P_{3,4},
    \dach{\Vkt F}_{3,4}-\Vkt P_{3,4}
    \right\rangle
    =0
\end{equation}
as $\| \Vkt F_{2,3}-\Vkt P_{2,3} \|=\|\dach{\Vkt F}_{3,4}-\Vkt P_{3,4} \|$ and 
$\|\dach{\Vkt F}_{2,3}-\Vkt P_{2,3}\|=\| \Vkt F_{3,4}-\Vkt P_{3,4} \|$
hold true. Now we factor Eq.\ (\ref{eq:diag}) by using Maple and see that $U$ appears as one of its factors, which finishes the second step of the proof. \medskip

\noindent
{\bf Step 3)}
Due to the relations 
\begin{equation*}
\| \Vkt F_{2,3}-\Vkt P_{2,3} \|=\|\dach{\Vkt F}_{3,4}-\Vkt P_{3,4} \|, 
\qquad
\| \dach{\Vkt F}_{2,3}-\Vkt P_{2,3} \|=\|{\Vkt F}_{3,4}-\Vkt P_{3,4} \|, 
\end{equation*}
and the angle equalities  shown in steps 1 and 2, the two tetrahedra $\go P_{2,3},\go P_{3,4},\go F_{2,3},\dach {\go F}_{2,3}$ and 
$\go P_{3,4},\go P_{2,3},\dach{\go F}_{3,4},{\go F}_{3,4}$ 
have to be isometric. In addition, they have the same orientation, which can be checked as follows. We compute 
\begin{equation*}
    \det (\Vkt  P_{3,4} -\Vkt P_{2,3},
    \Vkt  F_{2,3} -\Vkt P_{2,3},
    \dach{\Vkt  F}_{2,3} -\Vkt P_{2,3})
    -
    \det (\Vkt  P_{2,3} -\Vkt P_{3,4},
    \dach{\Vkt  F}_{3,4} -\Vkt P_{3,4},
    \Vkt  F_{3,4} -\Vkt P_{3,4})=0
\end{equation*}
and see again that one factor of it equals $U$. 
This already implies the existence of the half-turn $\rho$ 
 with:
$\go P_{2,3}\mapsto \go P_{3,4}$
$\go F_{2,3}\mapsto \dach{\go F}_{3,4}$
$\dach{\go F}_{2,3}\mapsto \go F_{3,4}$.
But this half-turn does not map $\go P_{1,2}$ to 
$\go P_{1,4}$ because this would imply the equality of the angles 
$\sphericalangle\go P_{1,2}\go P_{2,3}{\go P}_{3,4}$ and 
$\sphericalangle\go P_{1,4}\go P_{3,4}{\go P}_{2,3}$. But this is in general not the case as it can easily be checked that  
\begin{equation*}
   \left\langle
    \Vkt P_{1,2}-\Vkt P_{2,3},
    \Vkt P_{3,4}-\Vkt P_{2,3}
    \right\rangle
    -
    \left\langle
     \Vkt P_{1,4}-\Vkt P_{3,4},
    \Vkt P_{2,3}-\Vkt P_{3,4}
    \right\rangle
    \neq 0
\end{equation*}
under consideration of $U=0$. Note that $\rho(\go P_{1,2})$ 
and $\go P_{1,4}$ are related by a reflection along the plane spanned by $\go P_{3,4},\go F_{3,4},\dach{\go F}_{3,4}$.

This shows that the spherical indicatrices of the spherical 4R-loops around adjacent centers are the same spherical 4-bars but in different motion modes. 
The propagation of this property implies that the spherical indicatrices of the spherical 4R-loops around opposite centers are directly isometric. \medskip

The above three steps hold true for both possible values for $s$.  \hfill $\BewEnde$
\end{proof}

\begin{remark}\label{rem:6R}
Note that within each flexible arrangement of two Bennett tubes one can identify four different 6R loops (see Fig.\ \ref{fig7}). 
For the flexible bi-Bennetts of family (A) and (B) these overconstrained 6R linkages belong to the well-known class of line-symmetric 6R loops. But for the 6R loops contained in bi-Bennetts of family (C) it remains open whether they are novel or not.  \hfill $\diamond$
\end{remark}

\begin{figure}[t]
\begin{center}
\begin{overpic}
   [height=47mm]{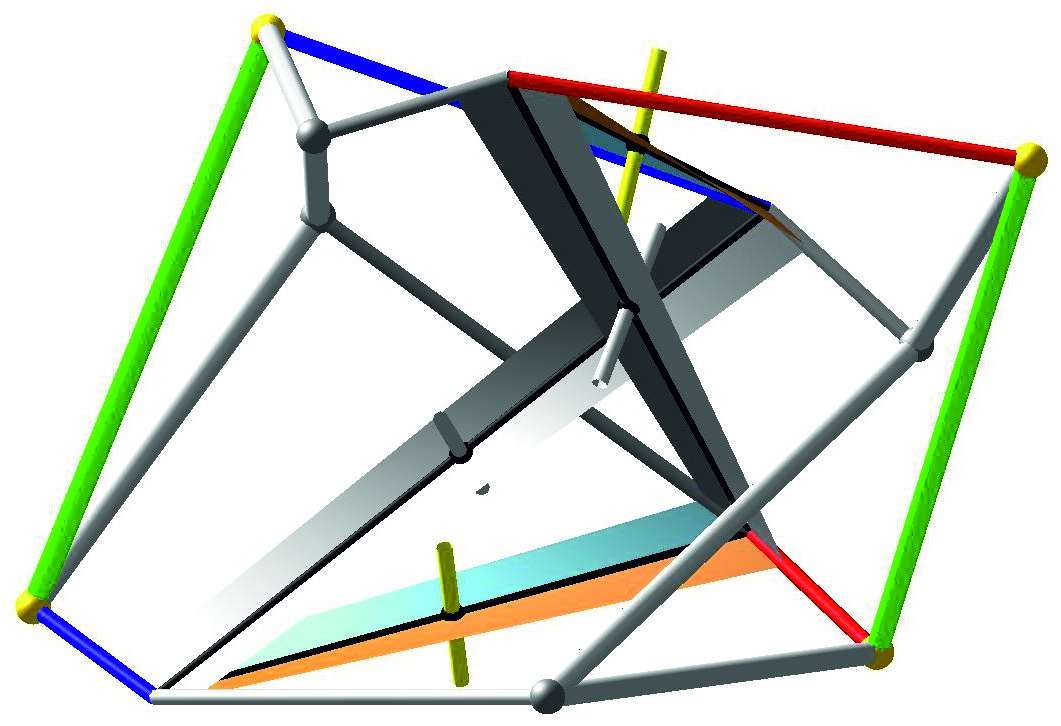}    
  \end{overpic} 
\end{center}	
\caption{One 6R loop is highlighted, which is contained in an flexible bi-Bennett of family (C).}
  \label{fig7}
\end{figure}

%%%%%%%%%%%%%%%%%%%%%%%%%%%%%

\section{Prismatic and pyramidal limits of  flexible bi-Bennetts} \label{sec:limits_results}

In this section we study the prismatic and pyramidal limits of the obtained families of flexible bi-Bennetts to shed some light on their relation to the known classes of flexible biprisms and bipyramids (see Subsection \ref{sec:review}), respectively. 

\subsection{Prismatic limits of families (A) and (B)}

Due to the line-symmetric construction of the families (A) and (B) also their prismatic limits have this property. Therefore, they belong to the class (III1) of Subsection \ref{sec:review}.

As pointed out in Subsection \ref{sec:planar} 
the anti-parallelogramic/parallelogrammic prismatic limit can always be obtained in two ways. Without loss of generality, we can assume  case (2a) for the anti-parallelogramic limit and case (2b) for the parallelogramic limit. Using this convention we get the following results:

\subsubsection{Anti-parallelogramic prisms}

In this case, Eq.\ (\ref{eq:linsym}) simplifies to 
\begin{equation*}
\begin{split}
&(\mu_{1,4} - \mu_{1,2} + \mu_{2,3} - \mu_{3,4})(\mu_{1,4} - \mu_{1,2} - \mu_{2,3} + \mu_{3,4})=0
\\
&(\mu_{1,4} - \mu_{1,2} + \mu_{2,3} - \mu_{3,4})
(\mu_{1,4} + \mu_{1,2} - \mu_{2,3} - \mu_{3,4})=0
\end{split}
\end{equation*}
We can distinguish two solution:
\begin{enumerate}
    \item One solution equals Eq.\ (\ref{eq1}) and it corresponds to family (B). Moreover, the obtained flexible arrangements of two anti-paralleogramic prisms have the same property as family (B) mentioned in Theorem \ref{thm:property}. In this case, the points
    $\go P_{1,4},\go P_{1,2}, \go P_{2,3}, \go P_{3,4}$ are located in a plane which is orthogonal to the symmetry plane of the anti-paralleogramic prisms. The intersection line of these two planes equals the line of symmetry. Therefore, these structures belong in addition to the class (III2ii) of Subsection \ref{sec:review}. 
    \item 
    $\mu_{1,4} = \mu_{1,2} - \mu_{2,3} + \mu_{3,4}$: This solution set 
    contains more flexible arrangements of two anti-paralleogramic prisms than only the ones corresponding to family (A), as a generic element of this set does not possess the isogonality property mentioned in  Theorem \ref{thm:property}. For meeting this property the following additional condition has to hold:
\begin{equation*}
(d_1\mu_{1,2} - d_1\mu_{2,3} - d_2\mu_{2,3} + d_2\mu_{3,4})
(d_1\mu_{1,2} - d_1\mu_{2,3} + d_2\mu_{2,3} - d_2\mu_{3,4})=0
\end{equation*}
In this case the structure has two flat-poses thus it also belongs to the class (III3). 
\end{enumerate}

\subsubsection{Parallelogramic prisms}

In this case Eq.\ (\ref{eq:linsym}) simplifies to 
\begin{equation*}
\begin{split}
&(\mu_{1,4} - \mu_{1,2} - \mu_{2,3} + \mu_{3,4})(\mu_{1,4} - \mu_{1,2} + \mu_{2,3} - \mu_{3,4})=0
\\
&(\mu_{1,4} - \mu_{1,2} - \mu_{2,3} + \mu_{3,4})
(\mu_{1,4} + \mu_{1,2} + \mu_{2,3} + \mu_{3,4})=0
\end{split}
\end{equation*}
We can distinguish two solution:
\begin{enumerate}
    \item One solution equals Eq.\ (\ref{eq2}), thus  the line of symmetry of $\go P_{1,4},\go P_{1,2}, \go P_{2,3}, \go P_{3,4}$ coincides with the one of $\go F_{1,4},\go F_{1,2}, \go F_{2,3}, \go F_{3,4}$. 
As a consequence, the prism will be mapped onto itself using this line of symmetry, which is a trivial solution to our problem.
\item 
$\mu_{1,4} = \mu_{1,2} + \mu_{2,3} - \mu_{3,4}$: Now  $\go P_{1,4},\go P_{1,2}, \go P_{2,3}, \go P_{3,4}$ form again a parallelogram; i.e.\ it is a planar cut of the prism. Therefore this solution set, which also belongs to the class (III4ii), 
    contains more flexible arrangements of two paralleogramic prisms than only the ones corresponding to family (A), as a generic element of this set does not possess the isogonality property mentioned in  Theorem \ref{thm:property}. For meeting this property the following additional condition has to hold:
\begin{equation*}
(d_1\mu_{1,2} + d_1\mu_{2,3} + d_2\mu_{2,3} - d_2\mu_{3,4})
(d_1\mu_{1,2} + d_1\mu_{2,3} - d_2\mu_{2,3} + d_2\mu_{3,4})=0
\end{equation*}
\end{enumerate}

\subsection{Pyramidal limits of families (A) and (B)}

Here we get exactly the same conditions as in the Bennett case but for $k=0$. 
Therefore, the pyramidal limit of family (A) belongs to the intersection of the classes (I1) and (I3). The pyramidal limit of family (B) belongs to the intersection of the classes (I1) and (I2).

\begin{figure}[t]
\begin{center}
\begin{overpic}
   [height=50mm]{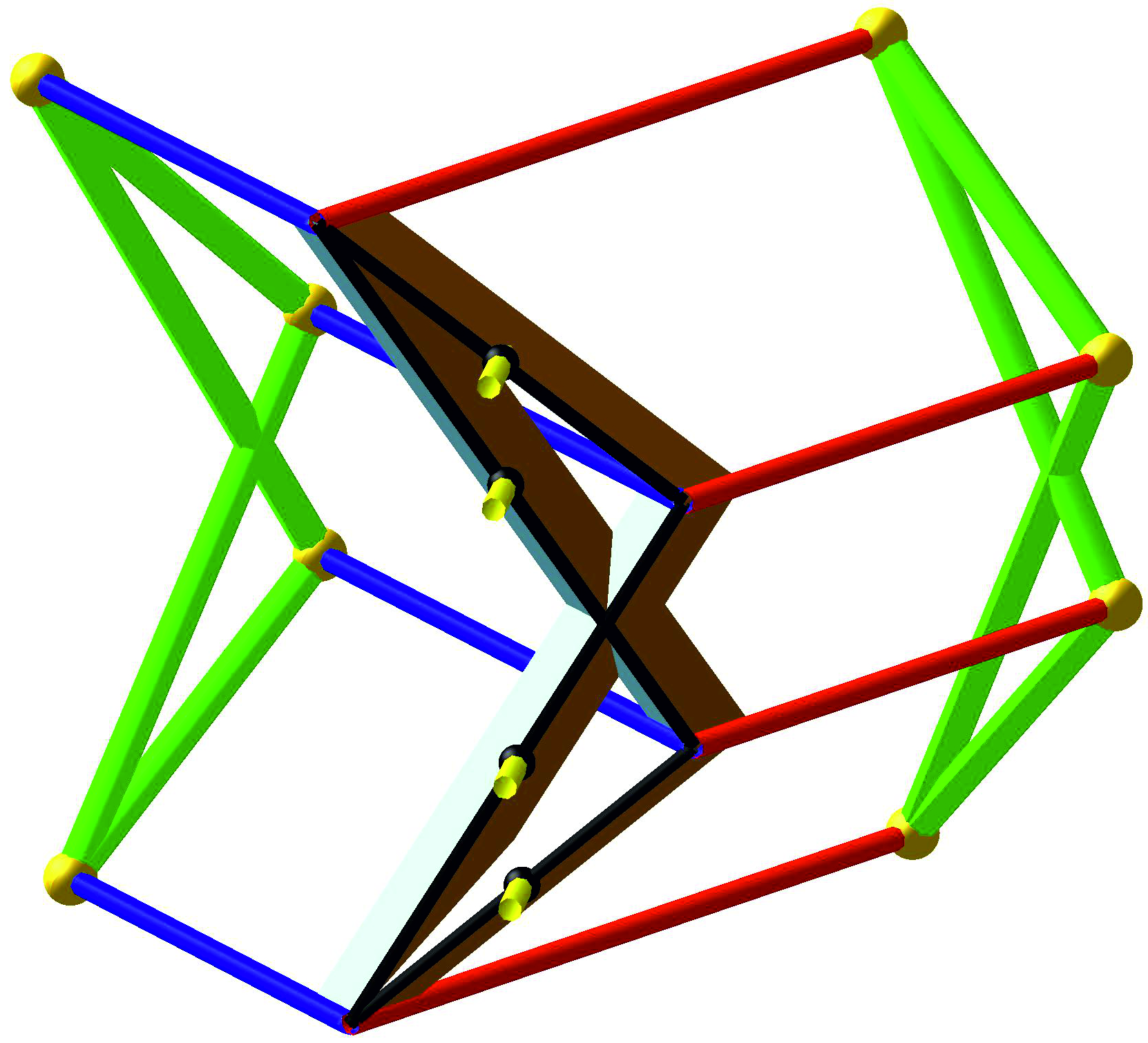}
\begin{scriptsize}
\put(0,0){a)}
\end{scriptsize}     
  \end{overpic} 
\hfill
\begin{overpic}
[height=38mm]{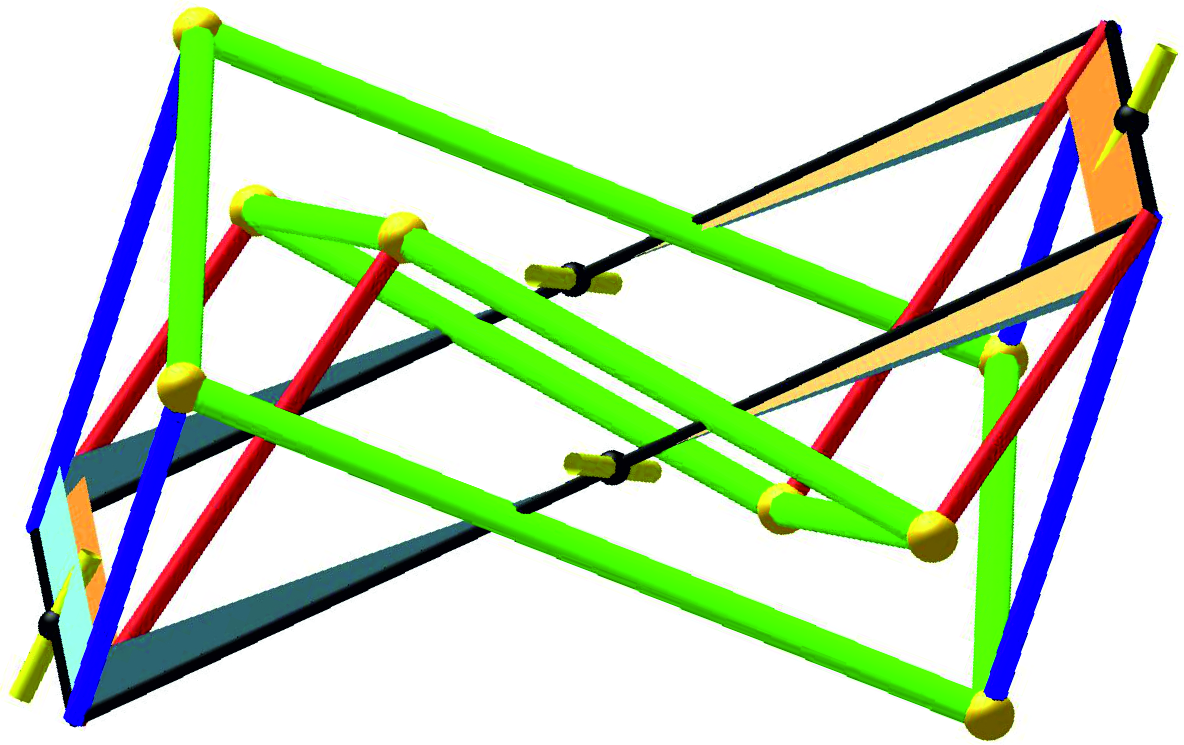}
\begin{scriptsize}
\put(-3,0){b)}
\end{scriptsize}  
\end{overpic}
\end{center}	
\caption{Illustration of the prismatic limits of family (C) for the anti-parallelogramic case (a) and the parallelogramic one (b). 
Illustration (a) corresponds to the following parameters: 
$k=1$, $s=1$ $\mu_{2,3}=\tfrac{2}{3}$, $\mu_{3,4}=\tfrac{1}{2}$, $d_1=\tfrac{1}{2}$, $d_2=\tfrac{1}{3}$, $\tau=\tfrac{3}{4}$ and 
$\bar\tau=\tfrac{3}{4}$.
 Illustration (b) corresponds to the following parameters: 
$k=1$, $s=1$ $\mu_{2,3}=\tfrac{1}{3}$, $\mu_{3,4}=\tfrac{1}{2}$, $d_1=\tfrac{2}{3}$, $d_2=\tfrac{3}{4}$, $\tau=\tfrac{3}{4}$ and 
$\bar\tau=-\tfrac{\sqrt{15281}}{413}$.
}
  \label{fig8}
\end{figure}    

\subsection{Prismatic limits of family (C)}

The computation of both (anti-parallelogramic and parallelogramic) prismatic limits is straight forward using the cases (2a) and (2b), respectively, of Subsection \ref{sec:planar}. 
Only the four equations given in Eq.\ (\ref{eq:4ohne}) 
are not fulfilled identically but yield the conditions $d_1=\bar d_1$ and $d_2=\bar d_2$. 
By taking them into account, we get special cases of the class (III2ii) and (III4ii), respectively, which are illustrated in Fig.\ \ref{fig8}.

\subsection{Pyramidal limit of family (C)}

Here we get again exactly the same conditions as in the Bennett case but for $k=0$. We end up with  special cases of the class (I2), which is illustrated in Fig.\  \ref{fig9}a.

Finally, it should be noted that the prismatic limits and the pyramidal one of family (C) also have the properties of Theorem \ref{thm:property2}.

\begin{figure}[t]
\begin{center}
\begin{overpic}
   [height=45mm]{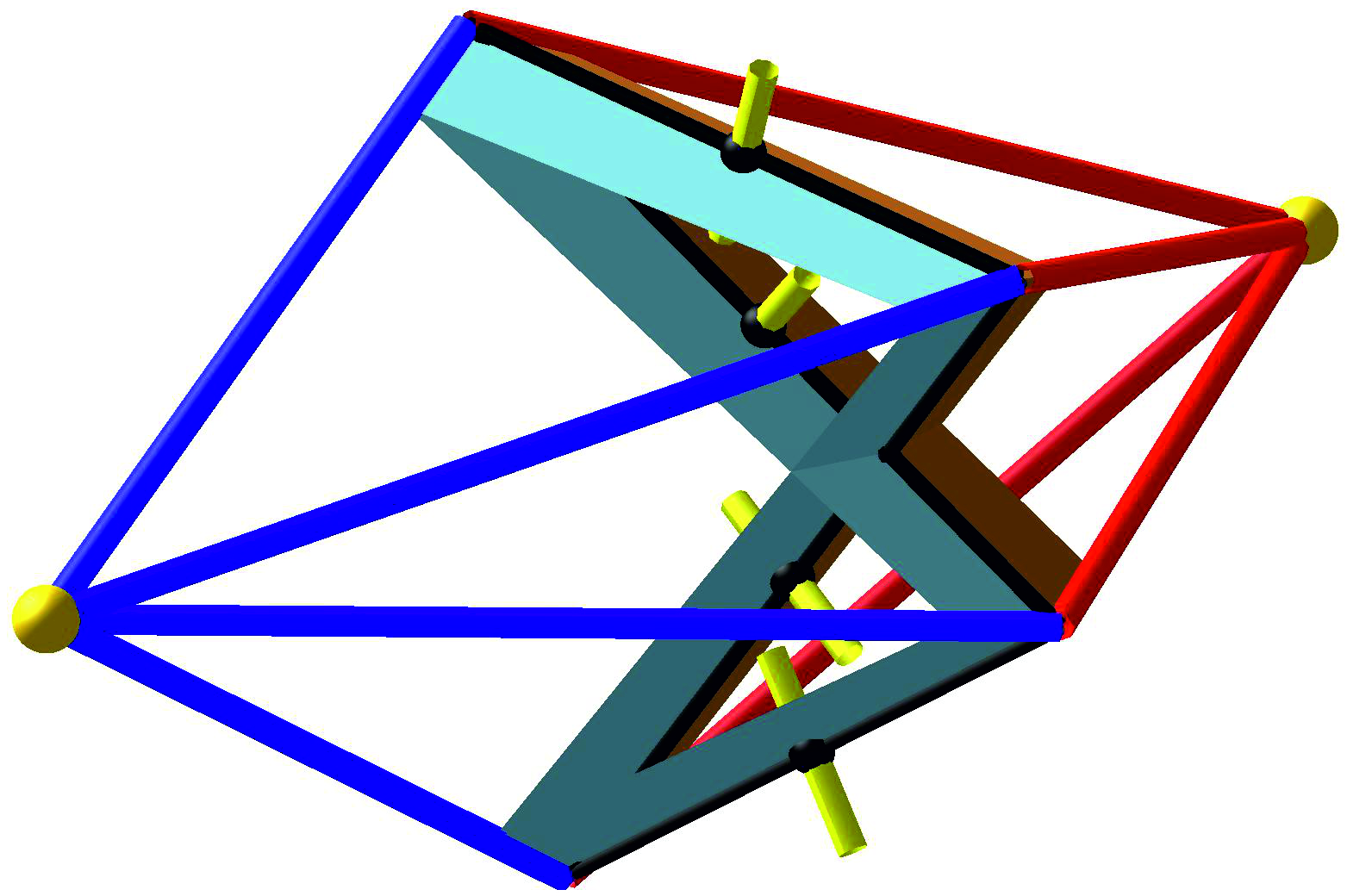}
\begin{scriptsize}
\put(0,0){a)}
\end{scriptsize}     
  \end{overpic} 
\hfill
\begin{overpic}
[height=45mm]{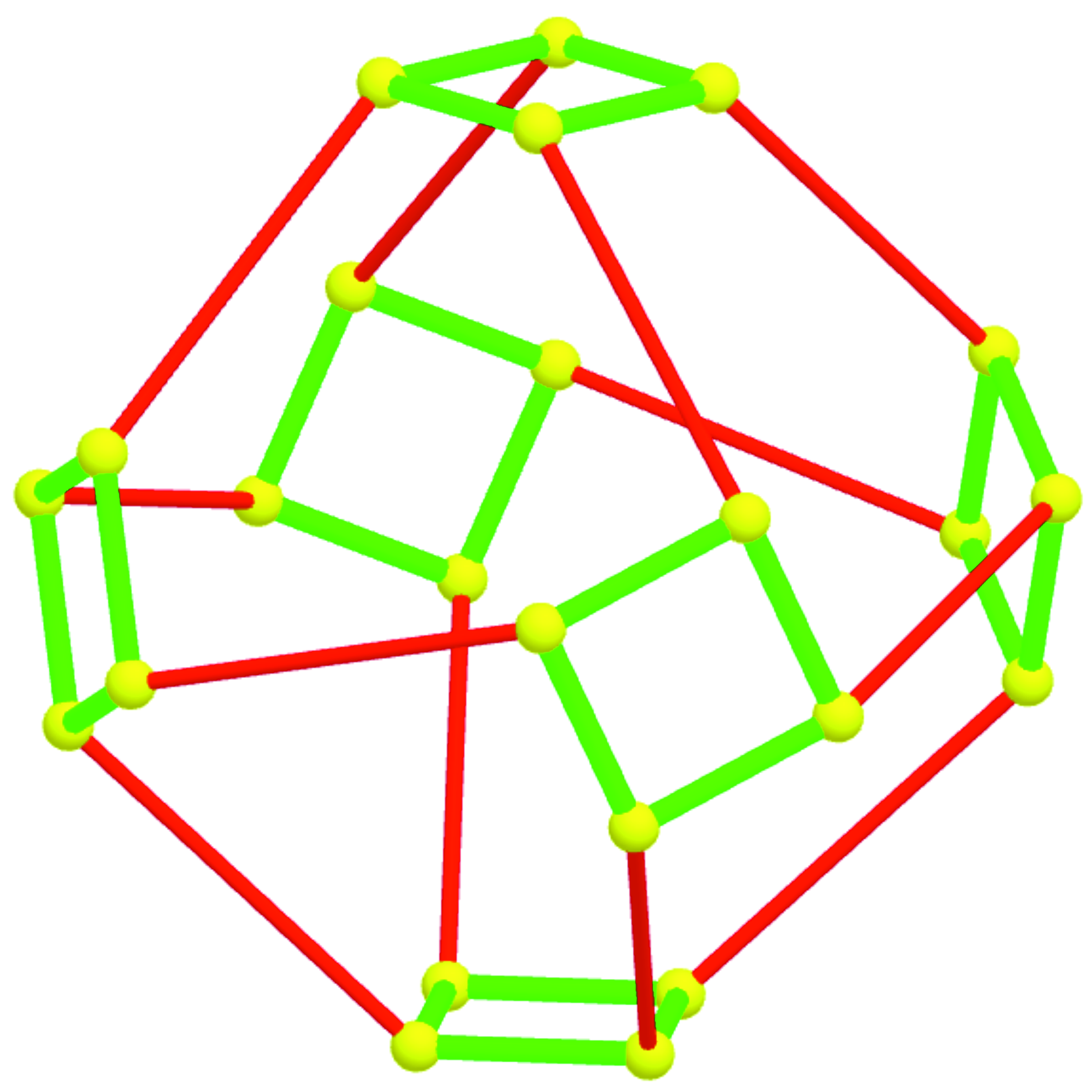}
\begin{scriptsize}
\put(0,0){b)}
\end{scriptsize}  
\end{overpic}
\end{center}	
\caption{(a) Illustration of the pyramidal limit of family (C). 
This illustration corresponds to the following parameters:  
$k=0$, $s=1$ $\mu_{2,3}=\tfrac{2}{3}$, $\mu_{3,4}=\tfrac{1}{2}$, $a_1=\tfrac{1}{2}$, $a_2=\tfrac{1}{3}$, $\tau=\tfrac{3}{4}$ and 
$\bar\tau=-\tfrac{\sqrt{20732639}}{3281}$. (b) Schematic sketch of a truncated octahedron-like structure which consists of 8 rigid hexagonal skew faces and 6 Bennett loops.}
  \label{fig9}
\end{figure}

%%%%%%%%%%%%%%%%%%%%%%%%%%%%%%%%%%%%%%%%%%%%%%%%%%%%%%%%%%%%%%%%%%%%%%%%%%%%%%%%%%%%%%%%%%%%%%%%%%%%%%%%%%%%%%%%%%%%%%%%%%%%%

\section{Conclusion and open problems} \label{sec:conclusion}

We studied the flexible arrangements of two Bennett mechanisms and obtained  
three 4-dimensional solution families (see Theorems \ref{thm:linsym} and \ref{thm:nonsym}).
The families (A) and (B) are globally line-symmetric (see Theorem \ref{thm:property}) and family (C) has a local line-symmetric property (see Theorem  \ref{thm:property2}).  The relation of these three families to flexible bipyramids and biprisms was discussed in Section \ref{sec:limits_results}, and gives rise to the conjecture that there are more general classes of bi-Bennetts which correspond in the limits to the complete classes (I2,II2,III2) and  (I3,II3,III3), respectively. From these more general classes one could also obtain as certain limits 
flexible arrangements of a Bennett tube with a quadrilateral pyramid/prism. Note that these structures are needed for the transition between flexible tubes with planar and skew faces. 

A further interesting question beside the one formulated in Remark \ref{rem:6R} is the following:
Is it possible to replace every pyramid in an octahedron by a Bennett mechanism? The resulting structure would look like a truncated octehdron 
which consists of 8 rigid hexagonal skew faces and 6 Bennett loops (see Fig.\ \ref{fig9}b). 
In this context, it should be noted that another combinatorial arrangement of six Bennett mechanisms exists, which is known as skew parallelepiped possessing even two degrees of freedom \cite{bennett2} (see also \cite[Section IX]{wunderlich}).

\begin{acknowledgement}
This research was funded in whole or in part by the Austrian Science Fund (FWF) [grant DOI
10.55776/F77]. For open access purposes, the author has applied a CC BY public copyright license to any author accepted manuscript version arising from this submission.
\end{acknowledgement}

\section*{Appendix: Non-existence of a flexible plane-symmetric arrangement of Bennett tubes}

Without loss of generality we can first eliminate the scaling factor by setting $k=1$. Then we consider the four points  $\go P_{i,j}$ of Eq.\ (\ref{eq:pij}) and 
consider their coplanarity condition $C=0$ with 
\begin{equation*}
    C:=\det\begin{pmatrix}
        1 & 1 & 1 & 1 \\
        \Vkt P_{1,4} & \Vkt P_{1,2} & \Vkt P_{2,3} & \Vkt P_{3,4} 
    \end{pmatrix}
\end{equation*}
which is of the form 
\begin{equation*}
(a_1-a_2)^2c_4\tau^4 -a_1a_2(a_1-a_2)c_3\tau^3+c_2\tau^2 -a_1a_2(a_1+a_2) c_1\tau  + (a_1+a_2)^2c_0 =0    
\end{equation*}
where the $c_i$'s abbreviate the coefficients of $\tau^i$, which depend on the unknowns $a_1,a_2,\mu_{1,4},\mu_{1,2},\mu_{2,3},\mu_{3,4}$. 
This equation has to be fulfilled independent of $\tau$ implying the 
conditions $c_4=c_3=c_2=c_1=c_0=0$ as $a_1a_2(a_1-a_2)(a_1+a_2)\neq 0$ has to hold. 
Then $c_0-c_4$ splits up into
\begin{equation*}
    -16a_1a_2(a_1-a_2)(a_1+a_2)(\mu_{1,4}\mu_{2,3} - \mu_{1,2}\mu_{3,4})
\end{equation*}
Without loss of generality, we can assume that one $\mu_{i,j}\neq0$ because otherwise all $P_{i,j}$ equal $F_{i,j}$, which form a skew isogram under the assumption of Convention \ref{rem:assumption}. Therefore, we can set 
$\mu_{3,4}=\mu_{1,4}\mu_{2,3}/\mu_{1,2}$.
Then we consider the numerators of the expressions $c_1-c_3$ and $c_1+c_3$, respectively, which factor into:
\begin{equation*}
\begin{split}
    &16a_2(\mu_{1,2} + \mu_{2,3})(\mu_{1,4} - \mu_{1,2})f_1 \\
&16a_1(\mu_{1,2} - \mu_{2,3})(\mu_{1,4} + \mu_{1,2})f_2
\end{split}    
\end{equation*}
with
\begin{equation*}
\begin{split}
&f_1:=a_1^2a_2^2\mu_{1,4}\mu_{2,3} + a_1^2a_2^2 + a_1^2\mu_{1,4}\mu_{2,3} + a_2^2\mu_{1,4}\mu_{2,3} + 3a_1^2 - a_2^2 + \mu_{1,4}\mu_{2,3} + 1 \\
&f_2:=a_1^2a_2^2\mu_{1,4}\mu_{2,3} + a_1^2a_2^2 + a_1^2\mu_{1,4}\mu_{2,3} + a_2^2\mu_{1,4}\mu_{2,3} + 3a_2^2 - a_1^2 + \mu_{1,4}\mu_{2,3} + 1
\end{split}    
\end{equation*}
We have to distinguish the following cases:
\begin{enumerate}
\item 
$f_1=f_2=0$ cannot hold as $f_1-f_2$ yields $4(a_1^2-a_2^2)$, which cannot vanish without contradiction. 
\item 
$\mu_{1,4} = \mu_{1,2} = \mu_{2,3}$: In this case all ${\mu_{i,j}}$ are identical to a value $\mu$. Now $c_4$ factors into $4a_1^2a_2^2\mu^2(a_1^2 - a_2^2)$, which cannot vanish without contradiction.
\item 
$\mu_{1,4} = \mu_{1,2}$ and $f_2=0$. Without loss of generality, we can solve $f_2=0$ for $\mu_{2,3}$. Then the resultant of $c_0$ and $c_2$ with respect to $\mu_{1,2}$ factors into:
\begin{equation*}
    2^{36}a_1^{16}a_2^8
(a_1^2 + 1)^8(a_2^2 + 1)^4(a_1^2 - a_2^2)^4g_1^4g_2^4g_3^4
\end{equation*}
with
\begin{equation*}
g_1:=a_1^2a_2^2 + 2a_2^2 + 1,\quad
g_2:=a_1^2a_2^2 - a_1^2 + 2a_2^2, \quad
g_3:=a_1^2a_2^2 - a_1^2 + 3a_2^2 + 1
\end{equation*}
$g_1=0$ does not have a real solution. 
From $g_2=0$ we get the solution $a_2=a_1/\sqrt{a_1^2+2}$. Back-substitution into $c_0$ shows that there is not real solution for $\mu_{1,2}$. 
From $g_3=0$ we get the solution $a_2=\sqrt{(a_1^2 + 3)(a_1^2 - 1)}/(a_1^2 + 3)$. Back-substitution into $c_0$ shows that it cannot vanish without contradiction.
\item 
$\mu_{2,3} = \mu_{1,2}$ and $f_1=0$. Can be done similarly to item 3 and one also ends up without a real solution. This closes the discussion of all cases.
\end{enumerate}

\begin{thebibliography}{99.}

% alphabetic order

\bibitem{baker_bennett}
J.E.\ Baker: The Bennett, Goldberg and Myard Linkages -- in Perspective. Mechanism and Machine Theory \textbf{14}(4):239--253 (1979)

\bibitem{baker6R}
J.E.\ Baker: A comparative survey of te Bennett-based, 6-revolute kinematic loops. Mechanism and Machine Theory \textbf{28}(1):83--96 (1993)

\bibitem{bennett}
G.T.\ Bennett:
A new mechanism.
Engineering \textbf{76}:777--778 (1903)

\bibitem{bennett2}
G.T.\ Bennett:
The Skew Isogram Mechanism. 
Proceedings of the London Mathematical Society  \textbf{s2-13}(1):151--173 (1914)

\bibitem{bottema}
O.\ Bottema, B.\ Roth: Theoretical Kinematics. North-Holland Publishing Company (1979)

\bibitem{bricard}
R.\ Bricard:  M{\'e}moire sur la th{\'e}orie de l’octa{\`e}dre articul{\'e}. 
Journal de Math{\'e}matiques pures et appliqu{\'e}es, Liouville \textbf{3}:113--148 (1897)

\bibitem{cayley}
A.\ Cayley: Note on the Tetrahedron.
Oxford, Cambridge and Dublin Messenger of Mathematics \textbf{III}:8--10 (1866)

\bibitem{delassus}
E.\ Delassus: 
Les cha{\^i}nes articul{\'e}es ferm{\'e}es et d{\'e}formables {\`a} quatre membres. Bulletin des Sciences Math{\'e}matiques \textbf{46}(2):283--304 (1922)

\bibitem{dietmaier}
P.\ Dietmaier: Einfach \"ubergeschlossene Mechanismen mit Drehgelenken. Habilitation thesis, TU Graz (1995)

\bibitem{hamann}
M.\ Hamann: Line-symmetric motions with respect to reguli. 
Mechanism and Machine Theory \textbf{46}:960--974 (2011)

\bibitem{herve}
J.M.\ Herve, M.\ Dahan: The two kinds of Bennett's mechanisms. 
Proceedings of the Sixth World Congress on Theory of Machines and Mechanisms (J.S.\ Rao and K.N.\ Gupta eds.), pages 116--119, 
J.\ Wiley, New York (1984)

\bibitem{hon}
Y.\ Hon-Cheung: 
The Bennett linkage, its associated tetrahedron and the hyperboloid of its axes. 
Mechanism and Machine Theory \textbf{16}(2):105--114 (1981)

\bibitem{hunt}
K.H.\ Hunt: Kinematic Geometry of Mechanisms. Clarendon Press, Oxford (1978)

\bibitem{kilian}
M.\ Kilian, G.\ Nawratil, M.\ Raffaelli, A.\ Rasoulzadeh, K.\ Sharifmoghaddam: Interactive design of discrete Voss nets and simulation of their rigid foldings. Computer Aided Geometric Design \textbf{111}:102346 (2024)

\bibitem{kokotsakis}
A.\ Kokotsakis: \"Uber bewegliche Polyeder. Mathematische Annalen \textbf{107}: 627--647 (1932)

\bibitem{krames}
J.\ Krames:
Zur Geometrie des Bennett'schen Mechanismus (\"Uber symmetrische Schrotungen V).
Sitz.ber.\ Österr.\ Akad.\ Wiss.\ Math.-Nat.wiss.\ Kl., II, \textbf{146}:159--173 (1937)

\bibitem{naw1}
G.\ Nawratil:  Self-motions of TSSM manipulators with two parallel rotary axes. ASME Journal of Mechanisms and Robotics \textbf{3}(3):031007 (2011) 

\bibitem{naw2}
G.\ Nawratil:  Flexible octahedra in the projective extension of the Euclidean 3-space. Journal for Geometry and Graphics \textbf{14}(2):147--169 (2010)

\bibitem{naw3}
G.\ Nawratil, H.\ Stachel: Composition of spherical four-bar-mechanisms. New Trends in Mechanisms Science – Analysis and Design (D.\ Pisla et al. eds.), pages 99--106, Springer (2010)

\bibitem{nelson}
G.D.\ Nelson: Extending Bricard Octahedra. arXiv:1011.5193 (2010)

\bibitem{pottmann}
H.\ Pottmann, J.\ Wallner: 
Computational Line Geometry. Springer (2001)

\bibitem{kiumars}
K.\ Sharifmoghaddam, R.\ Maleczek, G.\ Nawratil: Generalizing rigid-foldable tubular structures of T-hedral type. Mechanics Research Communications \textbf{132}:104151 (2023) 

\bibitem{song}
C.Y.\ Song, Y.\ Chen, I-M.\ Chen: 
A 6R linkage reconfigurable between the line-symmetric Bricard linkage and the Bennett linkage.
Mechanism and Machine Theory \textbf{70}:278--292 (2013) 

\bibitem{stachel}
H.\ Stachel:  Remarks on Bricard's flexible octahedra of type 3. 
Proceedings of 10th International Conference on Geometry and Graphics, 
pages 8--12 (2002) 

\bibitem{stachel2}
H.\ Stachel
A kinematic approach to Kokotsakis meshes.
Computer Aided Geometric Design 
\textbf{27}(6):428--437 (2010)

\bibitem{horns}
T.\ Tachi:
Designing rigidly foldable horns using Bricard’s octahedron.
ASME Journal of Mechanisms and Robotics \textbf{8}(3):031008 (2016) 

\bibitem{wunderlich}
W.\ Wunderlich: Zur Differenzengeometrie der Fl\"achen konstanter negativer Krümmung. Sitzungsbericht \"Osterr. Akad. Wiss., Math.-Naturw. Kl., Abt. II \textbf{160}:39--77 (1951) 


%%%%%%%%%%%%%%%%%%%%%%%%%%%%%%%%%%%%%%%%%%%%%%%%%%%%%%%




\end{thebibliography}
\end{document}